\newcommand{\N}{\mathcal{N}}
\newcommand{\I}{\mathcal{I}}
\newcommand{\M}{\mathcal{M}}
\newcommand{\pre}{\mathrm{Pre}}
\newtheorem{definition}{Definition}
\newtheorem{theorem}{Theorem}
\newtheorem{lemma}{Lemma}
\newenvironment{proof}{{\noindent \bf Proof}:\quad}{\hfill $\square$\par}
\newenvironment{proofit}{\noindent{\emph{\textbf{Proof}}:\quad}}{\hfill $\square$\par}
\newenvironment{pfofthm}[1]{\noindent{\emph{\textbf{Proof of Theorem #1}}:\quad}}{\hfill $\square$\par}
\newenvironment{pfoflma}[1]{\noindent{\emph{\textbf{Proof of Lemma #1}}:\quad}}{\hfill $\square$\par}
\title{Deterministic Approximation for Submodular Maximization over a Matroid in Nearly Linear Time}
\author{%
  Kai Han \quad Zongmai Cao \quad Shuang Cui \quad Benwei Wu\\
  School of Computer Science and Technology / Suzhou Research Institute\\
  University of Science and Technology of China\\
  \texttt{hankai@ustc.edu.cn, \{czm18,lakers,wubenwei\}@mail.ustc.edu.cn}
}
\begin{document}

\begin{titlepage}
\maketitle

\begin{abstract}
We study the problem of maximizing a non-monotone, non-negative submodular function subject to a matroid constraint. The prior best-known deterministic approximation ratio for this problem is $\frac{1}{4}-\epsilon$ under $\mathcal{O}(({n^4}/{\epsilon})\log n)$ time complexity. We show that this deterministic ratio can be improved to $\frac{1}{4}$ under $\mathcal{O}(nr)$ time complexity, and then present a more practical algorithm dubbed TwinGreedyFast which achieves $\frac{1}{4}-\epsilon$ deterministic ratio in nearly-linear running time of $\mathcal{O}(\frac{n}{\epsilon}\log\frac{r}{\epsilon})$. Our approach is based on a novel algorithmic framework of simultaneously constructing two candidate solution sets through greedy search, which enables us to get improved performance bounds by fully exploiting the properties of independence systems. As a byproduct of this framework, we also show that TwinGreedyFast achieves $\frac{1}{2p+2}-\epsilon$ deterministic ratio under a $p$-set system constraint with the same time complexity. To showcase the practicality of our approach, we empirically evaluated the performance of TwinGreedyFast on two network applications, and observed that it outperforms the state-of-the-art deterministic and randomized algorithms with efficient implementations for our problem.
\end{abstract}

\section{Introduction}

Submodular function maximization has aroused great interests from both academic and industrial societies due to its wide applications such as crowdsourcing~\citep{singla2016noisy}, information gathering~\citep{krause2011submodularity}, sensor placement~\citep{iyer2013submodular}, influence maximization~\citep{kuhnle2018fast,soma2017non} and exemplar-based clustering~\citep{gomes2010budgeted}. Due to the large volume of data and heterogeneous application scenarios in practice, there is a growing demand for designing accurate and efficient submodular maximization algorithms subject to various constraints.

Matroid is an important structure in combinatorial optimization that abstracts and generalizes the notion of linear independence in vector spaces~\citep{bryant1980independence}. 
The problem of submodular maximization subject to a matroid constraint (SMM) has attracted considerable attention since the 1970s. When the considered submodular function $f(\cdot)$ is monotone, the classical work of \citet{fisher1978analysis} presents a deterministic approximation ratio of $1/2$, which keeps as the best deterministic ratio during decades until \citet{buchbinder2019deterministic} improve this ratio to $1/2+\epsilon$ very recently. 

When the submodular function $f(\cdot)$ is non-monotone, the best-known deterministic ratio for the SMM problem is ${1}/{4}-\epsilon$, proposed by~\citet{lee2010maximizing}, but with a high time complexity of $\mathcal{O}((n^4\log n)/\epsilon)$. Recently, there appear studies aiming at designing more efficient and practical algorithms for this problem. In this line of studies, the elegant work of \citet{mirzasoleiman2016fast} and \citet{feldman2017greed} proposes the best deterministic ratio of $1/6-\epsilon$ and the fastest implementation with an expected ratio of $1/4$, and their algorithms can also handle more general constraints such as a $p$-set system constraint. For clarity, we list the performance bounds of these work in Table 1. However, it is still unclear whether the ${1}/{4}-\epsilon$ deterministic ratio for a single matroid constraint can be further improved, or whether there exist faster algorithms achieving the same ${1}/{4}-\epsilon$ deterministic ratio.

In this paper, we propose an approximation algorithm TwinGreedy (Alg. 1) with a deterministic ${1}/{4}$ ratio and $\mathcal{O}(nr)$ running time for maximizing a non-monotone, non-negative submodular function subject to a matroid constraint, thus improving the best-known $1/4-\epsilon$ deterministic ratio of \citet{lee2010maximizing}. Furthermore, we show that the solution framework of TwinGreedy can be implemented in a more efficient way, and present a new algorithm dubbed TwinGreedyFast with $1/4-\epsilon$ deterministic ratio and nearly-linear $\mathcal{O}(\frac{n}{\epsilon}\log\frac{r}{\epsilon})$ running time. To the best of our knowledge, TwinGreedyFast is the fastest algorithm achieving the $\frac{1}{4}-\epsilon$ deterministic ratio for our problem in the literature. As a byproduct, we also show that TwinGreedyFast can be used to address a more general $p$-set system constraint and achieves a $\frac{1}{2p+2}-\epsilon$ approximation ratio with the same time complexity.

It is noted that most of the current deterministic algorithms for non-monontone submodular maximization~(e.g., \citep{mirzasoleiman2016fast,feldman2017greed,mirzasoleiman2018streaming}) leverage the ``repeated greedy-search'' framework proposed by~\citet{gupta2010constrained}, where two or more candidate solution sets are constructed successively and then an unconstrained submodular maximization (USM) algorithm (e.g., \citep{buchbinder2015tight}) is called to find a good solution among the candidate sets and their subsets. Our approach is based on a novel ``simultaneous greedy-search'' framework different from theirs, where two disjoint candidate solution sets $S_1$ and $S_2$ are built   simultaneously with only single-pass greedy searching, without calling a USM algorithm. We call these two solution sets $S_1$ and $S_2$ as ``twin sets'' because they ``grow up'' simultaneously. Thanks to this framework, we are able to bound the ``utility loss'' caused by greedy searching using $S_1$ and $S_2$ themselves, through a careful classification of the elements in an optimal solution $O$ and mapping them to the elements in $S_1\cup S_2$. Furthermore, by incorporating a thresholding method inspired by \citet{badanidiyuru2014fast} into our framework, the TwinGreedyFast algorithm achieves nearly-linear time complexity by only accepting elements whose marginal gains are no smaller than given thresholds.

We evaluate the performance of TwinGreedyFast on two applications: social network monitoring and multi-product viral marketing. 
The experimental results show that TwinGreedyFast runs more than an order of magnitude faster than the state-of-the-art efficient algorithms for our problem, and also achieves better solution quality than the currently fastest randomized algorithms in the literature.

\begin{table}[t]
  \caption{Approximation for Non-monotone Submodular Maximization over a Matroid}
  \label{table1}
  \centering
  \begin{tabular}{llll}
   \toprule 
     Algorithms & Ratio & Time Complexity & Type   \\
    \midrule
   \citet{lee2010maximizing} &  $1/4 -\epsilon$ &$\mathcal{O}((n^4\log n)/\epsilon)$ & Deterministic   \\
   \citet{mirzasoleiman2016fast} &  $1/6 - \epsilon$ &$\mathcal{O}(nr+r/\epsilon)$ & Deterministic   \\
   \citet{feldman2017greed} &  $1/4$ &$\mathcal{O}(nr)$ & Randomized   \\
    \citet{buchbinder2019constrained} &  $0.385$ &$\mathrm{poly}(n)$ & Randomized   \\
    \midrule
   TwinGreedy (Alg. 1) &  $1/4$ &$\mathcal{O}(nr)$ & Deterministic   \\
   TwinGreedyFast (Alg. 2) &  $1/4-\epsilon$ &$\mathcal{O}((n/\epsilon)\log (r/\epsilon))$ & Deterministic   \\
    \bottomrule 
  \end{tabular}
\end{table}

\subsection{Related Work} \label{sec:relatedwork}

When the considered submodular function $f(\cdot)$ is monotone, \citet{calinescu2011maximizing} propose an optimal $1-1/e$ expected ratio for the problem of submodular maximization subject to a matroid constraint (SMM). The SMM problem seems to be harder when $f(\cdot)$ is non-monotone, and the current best-known expected ratio is 0.385~\citep{buchbinder2019constrained}, got after a  series of studies~\citep{vondrak2013symmetry, gharan2011submodular, feldman2011unified, lee2010maximizing}. However, all these approaches are based on tools with high time complexity such as multilinear extension. 

There also exist efficient deterministic algorithms for the SMM problem: 
\citet{gupta2010constrained} are the first to apply the ``repeated greedy search'' framework described in last section and achieve $1/12-\epsilon$ ratio, which is improved to $1/6-\epsilon$ by \citet{mirzasoleiman2016fast} and \citet{feldman2017greed}; under a more general $p$-set system constraint, \citet{mirzasoleiman2016fast} achieve $\frac{p}{(p+1)(2p+1)}-\epsilon$ deterministic ratio and \citet{feldman2017greed} achieve $\frac{1}{p+2\sqrt{p}+3}-\epsilon$ deterministic ratio (assuming that they use the USM algorithm with $1/2-\epsilon$ deterministic ratio in~\citep{buchbinder2018deterministic}); some studies also propose streaming algorithms under various constraints~\citep{mirzasoleiman2018streaming,haba2020streaming}. 

As regards the efficient randomized algorithms for the SMM problem, the SampleGreedy algorithm in~\citep{feldman2017greed} achieves $1/4$ expected ratio with $\mathcal{O}(nr)$ running time; the algorithms in~\citep{buchbinder2014submodular} also achieve a $1/4$ expected ratio with slightly worse time complexity of $\mathcal{O}(nr\log n)$ and a $0.283$ expected ratio under cubic time complexity of $\mathcal{O}(nr\log n+ r^{3+\epsilon})$\footnote{Although the randomized 0.283-approximation algorithm in~\citep{buchbinder2014submodular} has a better approximation ratio than TwinGreedy, its time complexity is larger than that of TwinGreedy by at least an additive factor of $\mathcal{O}(r^3)$, which can be large as $r$ can be in the order of $\Theta(n)$.};  \citet{chekuri2019parallelizing} provide a $0.172-\epsilon$ expected ratio under $\mathcal{O}(\log n\log r/\epsilon^2)$ adaptive rounds; and \citet{feldman2018less} propose a $1/(3+2\sqrt{2})$ expected ratio under the steaming setting. It is also noted that \citet{buchbinder2019deterministic} provide a de-randomized version of the algorithm in~\citep{buchbinder2014submodular} for monotone submodular maximization, which has time complexity of $\mathcal{O}(nr^2)$. However, it remains an open problem to find the approximation ratio of this de-randomized algorithm for the SMM problem with a non-monotone objective function.

A lot of elegant studies provide efficient submodular optimization algorithms for monotone submodular functions or for a cardinality constraint~\citep{badanidiyuru2014fast,buchbinder2017comparing,balkanski2018non,mirzasoleiman2015lazier,balkanski2019optimal,kuhnle2019interlaced,fahrbach2019non,ene2019towards}. However, these studies have not addressed the problem of non-monotone submodular maximization subject to a general matroid (or $p$-set system) constraint, and our main techniques are essentially different from theirs.

\section{Preliminaries}

Given a ground set $\N$ with $|\N|=n$, a function $f: 2^{\N}\mapsto \mathbb{R}$ is submodular if for all $X, Y\subseteq \N$, $f(X)+f(Y)\geq f(X\cup Y)+f(X\cap Y)$. The function $f(\cdot)$ is called non-negative if $f(X)\geq 0$ for all $X\subseteq \N$, and $f(\cdot)$ is called non-monotone if $\exists X\subset Y\subseteq \N: f(X)> f(Y)$. 
For brevity, we use $f(X\mid Y)$ to denote $f(X\cup Y)-f(Y)$ for any $X,Y\subseteq \N$, and write $f(\{x\}\mid Y)$ as $f(x\mid Y)$ for any $x\in \N$. We call $f(X\mid Y)$ as the ``marginal gain'' of $X$ with respect to $Y$.

An independence system $(\N,\I)$ consists of a finite ground set $\N$ and a family of independent sets $\I\subseteq 2^{\N}$ satisfying: (1) $\emptyset\in \I$; (2) If $A\subseteq B\in\I$, then $A\in \I$ (called hereditary property). An independence system $(\N,\I)$ is called a matroid if it satisfies: for any $A\in \I, B\in \I$ and $|A|<|B|$, there exists $x\in B\backslash A$ such that $A\cup \{x\}\in \I$ (called exchange property).

Given an independence system $(\N,\I)$ and any $X\subseteq Y\subseteq \N$, $X$ is called a base of $Y$ if: (1) $X\in \I$; (2) $\forall x\in Y\backslash X: X\cup \{x\}\notin \I$. The independence system $(\N,\I)$ is called a $p$-set system if for every $Y\subseteq \N$ and any two basis $X_1, X_2$ of $Y$, we have $|X_1|\leq p|X_2|$~($p\geq 1$). It is known that $p$-set system is a generalization of several structures on independence systems including matroid, $p$-matchoid and $p$-extendible system, and an inclusion hierarchy of these structures can be found in~\citep{haba2020streaming}.

In this paper, we consider a non-monotone, non-negative submodular function $f(\cdot)$. Given $f(\cdot)$ and a matroid $(\N,\I)$, our optimization problem is $\max\{f(S): S\in \I\}$. In the end of Section~\ref{sec:twingreedyfast}, we will also consider a more general case where $(\N,\I)$ is a $p$-set system.

We introduce some frequently used properties of submodular functions. For any $X\subseteq Y\subseteq \N$ and any $Z\subseteq \N\backslash Y$, we have $f(Z\mid Y)\leq f(Z\mid X)$; this property can be derived by the definition of submodular functions. For any $X,Y\subseteq \N$ and a partition $Z_1,Z_2,\cdots,Z_t$ of $Y\backslash X$, we have
\begin{eqnarray}
f(Y\mid X)= \sum\nolimits_{j=1}^t f(Z_j\mid Z_1\cup\cdots\cup Z_{j-1}\cup X) \leq \sum\nolimits_{j=1}^t f(Z_j\mid X) \label{eqn:fundamentaleqn}
\end{eqnarray}

For convenience, we use $[h]$ to denote $\{1,\cdots, h\}$ for any positive integer $h$, and use $r$ to denote the \textit{rank} of $(\N,\I)$, i.e., $r=\max\{|S|: S\in \I\}$, and denote an optimal solution to our problem by $O$. 

\section{The TwinGreedy Algorithm} \label{sec:twingreedy}

In this section, we consider a matroid constraint and introduce the TwinGreedy algorithm (Alg. 1) that achieves $1/4$ approximation ratio. The TwinGreedy algorithm maintains two solution sets $S_1$ and $S_2$ that are initialized to empty sets. At each iteration, it considers all the candidate elements in $\N\backslash (S_1\cup S_2)$ that can be added into $S_1$ or $S_2$ without violating the feasibility of $\I$. If there exists such an element, then it greedily selects $(e, S_i)$ where $i\in \{1,2\}$ such that adding $e$ into $S_i$ can bring the maximal marginal gain $f(e\mid S_i)$ without violating the feasibility of $\I$. TwinGreedy terminates when no more elements can be added into $S_1$ or $S_2$ with a positive marginal gain while still keeping the feasibility of $\I$. Then it returns the one between $S_1$ and $S_2$ with the larger objective function value.

\begin{algorithm} [h]
        $S_1\gets \emptyset; S_2\gets \emptyset$;\\
        \Repeat{$\M_1\cup\M_2= \emptyset$}{
            $\M_1\gets \{e\in \N\backslash (S_1\cup S_2): S_1\cup \{e\}\in \I\}$\\
            $\M_2\gets \{e\in \N\backslash (S_1\cup S_2): S_2\cup \{e\}\in \I\}$\\
            $C\gets \{j\mid j\in \{1,2\}\wedge \M_j\neq \emptyset\}$\\
            \If{$C\neq \emptyset$}{
                $(i,e)\gets \arg\max_{j\in C, u\in \M_j} {f(u\mid S_j)}$; (ties broken arbitrarily)\\
                \lIf{$f(e\mid S_i)\leq 0$}{\textbf{Break}}  \label{ln:breakline}
                $S_i\gets S_i\cup \{e\}$;
            }
        }
        $S^*\gets \arg\max_{X\in \{S_1,S_2\}}f(X)$ \label{ln:determineS}\\

    \Return{$S^*$}
    \caption{$\mathsf{TwinGreedy}(\N,\I,f(\cdot))$}
    \label{alg:AdaptiveSimu}
\end{algorithm}

Although the TwinGreedy algorithm is simple, its performance analysis is highly non-trivial. Roughly speaking, as TwinGreedy adopts a greedy strategy to select elements, we try to find some ``competitive relationships'' between the elements in $O$ and those in $S_1\cup S_2$, such that the total marginal gains of the elements in $O$ with respect to $S_1$ and $S_2$ can be upper-bounded by $S_1$ and $S_2$'s own objective function values. However, this is non-trivial due to the correlation between the elements in $S_1$ and $S_2$. To overcome this hurdle, we first classify the elements in $O$, as shown in Definition~\ref{def:keydef}:

\begin{definition}
Consider the two solution sets $S_1$ and $S_2$ when TwinGreedy returns. We can write $S_1\cup S_2$ as $\{v_1,v_2,\cdots, v_k\}$ where $k=|S_1\cup S_2|$, such that $v_t$ is added into $S_1\cup S_2$ by the algorithm before $v_s$ for any $1\leq t< s\leq k$. With this ordered list, given any $e=v_j\in S_1\cup S_2$, we define
\begin{eqnarray}
\pre(e,S_1)=\{v_1,\cdots, v_{j-1}\}\cap S_1;~~~\pre(e,S_2)=\{v_1,\cdots, v_{j-1}\}\cap S_2;
\end{eqnarray}
That is, $\pre(e,S_i)$ denotes the set of elements in $S_i$ ($i \in \{1,2\}$) that are added by the TwinGreedy algorithm before adding $e$. Furthermore, we define
\begin{eqnarray}
O_1^+=\{e\in O\cap S_1: \pre(e,S_2)\cup \{e\}\in \I\};&&O_1^-=\{e\in O\cap S_1: \pre(e,S_2)\cup \{e\}\notin \I\} \nonumber\\
O_2^+=\{e\in O\cap S_2: \pre(e,S_1)\cup \{e\}\in \I\};&&O_2^-=\{e\in O\cap S_2: \pre(e,S_1)\cup \{e\}\notin \I\} \nonumber\\
O_3=\{e\in O\backslash (S_1\cup S_2): S_1\cup \{e\}\notin \I\};~~~&&O_4=\{e\in O\backslash (S_1\cup S_2): S_2\cup \{e\}\notin \I\} \nonumber
\end{eqnarray}
We also define the marginal gain of any $e\in S_1\cup S_2$ as $\delta(e)=f(e\mid \pre(e,S_1))\cdot \mathbf{1}_{S_1}(e)+ f(e\mid \pre(e,S_2))\cdot \mathbf{1}_{S_2}(e)$, where $\mathbf{1}_{S_i}(e)=1$ if $e\in S_i$ and $\mathbf{1}_{S_i}(e)=0$ otherwise ($\forall i\in \{1,2\}$).
\label{def:keydef}
\end{definition}

Intuitively, each element $e\in O_1^+\subseteq S_1$ can also be added into $S_2$ without violating the feasibility of $\I$ when $e$ is added into $S_1$, while the elements in $O_1^-\subseteq S_1$ do not have this nice property. The sets $O_2^+$ and $O_2^-$ can be understood similarly. With the above classification of the elements in $O$, we further consider two groups of elements: $O_1^+\cup O_1^-\cup O_2^-\cup O_3$ and $O_1^-\cup O_2^+\cup O_2^-\cup O_4$. By leveraging the properties of independence systems, we can map the first group to $S_1$ and the second group to $S_2$, as shown by Lemma~\ref{lma:definitionofpi} (the proof can be found in the supplementary file). Intuitively, Lemma~\ref{lma:definitionofpi} holds due to the exchange property of matroids.

\begin{lemma}
There exists an injective function $\pi_1: O_1^+\cup O_1^-\cup O_2^-\cup O_3 \mapsto S_1$ such that: 
\begin{enumerate}
\item For any $e\in O_1^+\cup O_1^-\cup O_2^-\cup O_3$, we have $\pre(\pi_1(e),S_1)\cup \{e\}\in \I$.
\item For each $e\in O_1^+\cup O_1^-$, we have $\pi_1(e)=e$.
\end{enumerate}
Similarly, there exists an injective function $\pi_2: O_1^-\cup O_2^+\cup O_2^-\cup O_4 \mapsto S_2$ such that $\pre(\pi_2(e),S_2)\cup \{e\}\in \I$ for each $e\in O_1^-\cup O_2^+\cup O_2^-\cup O_4$ and $\pi_2(e)=e$ for each $e\in O_2^+\cup O_2^-$.
\label{lma:definitionofpi}
\end{lemma}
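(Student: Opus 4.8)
The plan is to prove the statement for $\pi_1$; the construction of $\pi_2$ then follows by an identical argument with the roles of $S_1,S_2$ and the corresponding index classes interchanged. I would split the domain into the \emph{easy} part $O_1^+\cup O_1^-=O\cap S_1$, on which I set $\pi_1(e)=e$, and the \emph{hard} part $D:=O_2^-\cup O_3$, which must be injected into $S_1\setminus(O\cap S_1)$ while respecting the prefix condition. For the easy part, writing $e=v_j\in S_1$ gives $\pre(e,S_1)\cup\{e\}\subseteq S_1\in\I$, so Condition~1 holds automatically and Condition~2 holds by construction; these images are distinct, so it remains only to define $\pi_1$ injectively on $D$ with image disjoint from $O\cap S_1$, after which the two pieces glue into a single injection.

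Two preliminary facts drive the argument. First, the whole domain $L:=(O\cap S_1)\cup D$ is contained in $O$, hence $L\in\I$. Second, every $e\in D$ is spanned by $S_1$, i.e. $S_1\cup\{e\}\notin\I$: for $e\in O_3$ this is immediate from Definition~\ref{def:keydef}, and for $e\in O_2^-$ it follows because $\pre(e,S_1)\cup\{e\}\notin\I$ with $\pre(e,S_1)\subseteq S_1$, so the hereditary property forces $S_1\cup\{e\}\notin\I$. Ordering $S_1=(s_1,\dots,s_m)$ by insertion time, the hereditary property also shows that for each $e\in D$ the set of compatible slots $\{s_j:\pre(s_j,S_1)\cup\{e\}\in\I\}=\{s_j:\{s_1,\dots,s_{j-1},e\}\in\I\}$ is a prefix $\{s_1,\dots,s_{c(e)}\}$, where the critical index $c(e)$ is well defined because $\{e\}\in\I$ (so $c(e)\ge 1$) and $e\in\mathrm{span}(S_1)$ (so $c(e)\le m$).

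I would then obtain $\pi_1$ on $D$ from Hall's theorem applied to the bipartite graph joining each $e\in D$ to its compatible slots lying in $S_1\setminus(O\cap S_1)$. For $T\subseteq D$ with $c^*:=\max_{e\in T}c(e)$, the neighborhood is $\{s_1,\dots,s_{c^*}\}\setminus(O\cap S_1)$, so $|N(T)|=c^*-|L\cap\{s_1,\dots,s_{c^*}\}|$. The crux is verifying Hall's condition: since every element of $T$ lies in $\mathrm{span}(\{s_1,\dots,s_{c^*}\})$ and $T$ is disjoint from $S_1$, the set $T\cup\bigl(L\cap\{s_1,\dots,s_{c^*}\}\bigr)$ is an independent subset (it is contained in $L\in\I$) of a flat of rank $c^*$, whence $|T|+|L\cap\{s_1,\dots,s_{c^*}\}|\le c^*$, i.e. $|T|\le|N(T)|$. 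This produces an injection $\psi:D\to S_1\setminus(O\cap S_1)$ respecting the prefix condition, and combining $\psi$ with the identity on $O\cap S_1$ yields the required $\pi_1$.

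I expect the main obstacle to be exactly this counting step, where the matroid structure is used twice at once: the independence of $L$, inherited from $O$, to guarantee that $T$ together with the already-occupied prefix slots stays independent, and the rank bound on the flat spanned by a prefix of $S_1$ to cap that size by $c^*$ (this rank bound being the quantitative face of the exchange property invoked in the surrounding discussion). Everything else — the prefix shape of the compatible slots, the well-definedness of $c(e)$, and the fact that $D$ is spanned by $S_1$ — reduces to the hereditary property together with the classes in Definition~\ref{def:keydef}.
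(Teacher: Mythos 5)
Your argument is correct, but it takes a genuinely different route from the paper's. The paper constructs $\pi_1$ greedily, scanning the slots of $S_1=\{u_1,\dots,u_s\}$ backwards from $j=s$ to $j=0$, maintaining the set $L_j$ of not-yet-mapped domain elements, assigning to slot $u_j$ either $u_j$ itself (when $u_j\in O_1^+\cup O_1^-$) or an arbitrary compatible element of $L_j$, and proving by induction that $|L_j|\leq j$ (the inductive step uses the exchange property: if no element of $L_j$ extends $\{u_1,\dots,u_{j-1}\}$ then $|L_j|\leq j-1$). Your proof replaces this backward greedy construction with an application of Hall's theorem, and the key counting step is sound: the compatible slots for each $e$ in the hard part $D=O_2^-\cup O_3$ do form a prefix by heredity, each such $e$ is spanned by $\{s_1,\dots,s_{c(e)}\}$ because slot $c(e)+1$ fails (with the case $c(e)=m$ covered by $S_1\cup\{e\}\notin\I$, which you correctly derive for $O_2^-$ as well as $O_3$), and the independence of $T\cup\bigl(L\cap\{s_1,\dots,s_{c^*}\}\bigr)$ inside a rank-$c^*$ span gives exactly Hall's condition. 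In effect your Hall inequality is the ``global'' form of the paper's running invariant $|L_j|\leq j$. What the paper's constructive version buys is that it transfers almost verbatim to the $p$-set system extension in Appendix B (where each slot may absorb up to $p$ preimages and there is no matroid rank function to lean on), whereas your argument would need to be recast via a deficiency version of Hall's theorem and a substitute for the rank bound; what your version buys is a shorter, more conceptual proof for the matroid case that cleanly isolates where the exchange property enters.
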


The first property shown in Lemma~\ref{lma:definitionofpi} implies that, at the moment that $\pi_1(e)$ is added into $S_1$, $e$ can also be added into $S_1$ without violating the feasibility of $\I$ (for any $e\in O_1^+\cup O_1^-\cup O_2^-\cup O_3$). This makes it possible to compare the marginal gain of $e$ with respect to $S_1$ with that of $\pi_1(e)$. The construction of $\pi_2(\cdot)$ is also based on this intuition. With the two injections $\pi_1(\cdot)$ and $\pi_2(\cdot)$, we can bound the marginal gains of $O_1^+$ to $O_4$ with respect to $S_1$ and $S_2$, as shown by the Lemma~\ref{lma:boudingmarginalgain}. Lemma~\ref{lma:boudingmarginalgain} can be proved by using Definition~\ref{def:keydef}, Lemma~\ref{lma:definitionofpi} and the submodularity of $f(\cdot)$.  

\begin{figure}[t]
  \centering
  \includegraphics[width=0.618\textwidth]{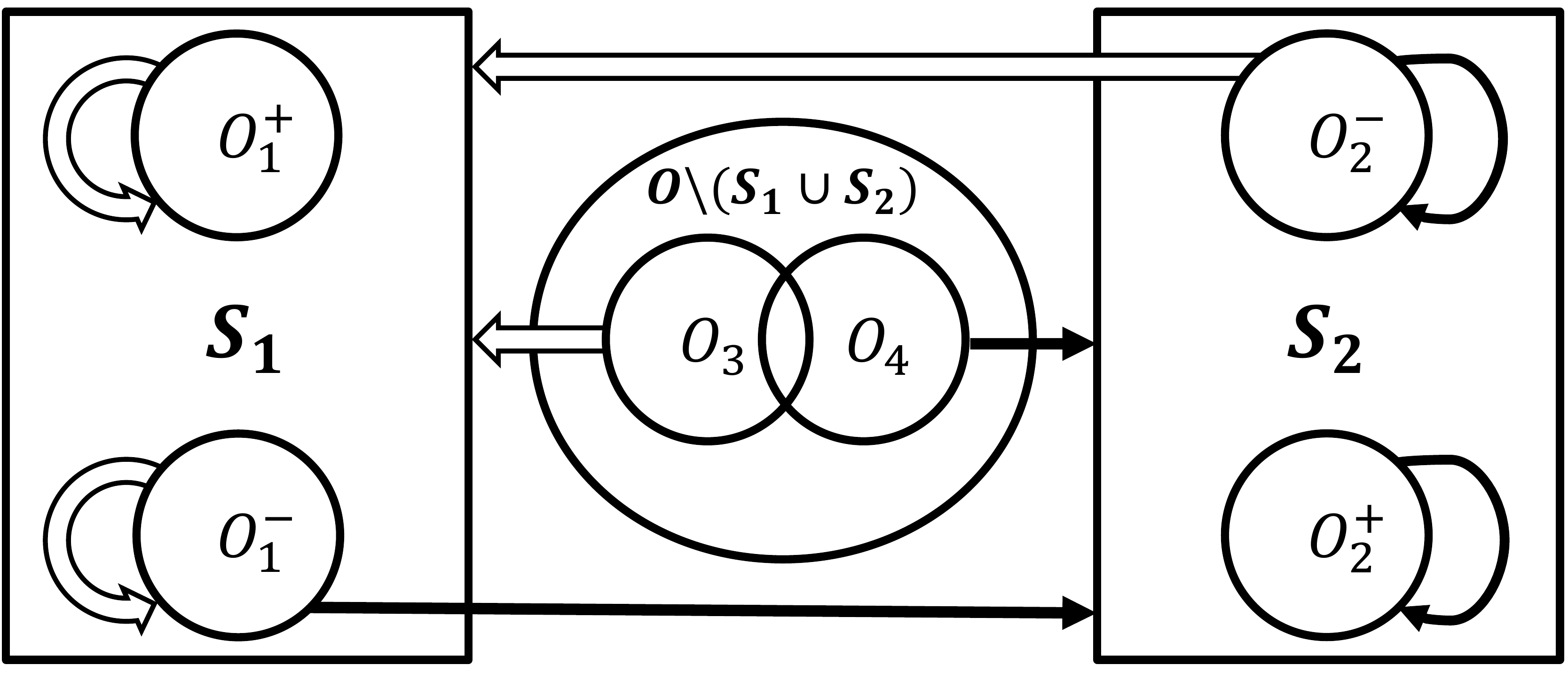}
  \caption{Illustration on the mappings constructed for performance analysis, where the hollow arrows denote $\pi_1(\cdot)$ and the solid arrows denote $\pi_2(\cdot)$.}
\end{figure}

\begin{lemma}
The TwinGreedy algorithm satisfies:
\begin{eqnarray}
f(O_1^+\mid S_2)\leq \sum\nolimits_{e\in O_1^+}\delta(\pi_1(e));&&f(O_2^+\mid S_1)\leq \sum\nolimits_{e\in O_2^+}\delta(\pi_2(e))\\
f(O_1^-\mid S_2)\leq \sum\nolimits_{e\in O_1^-}\delta(\pi_2(e));&&f(O_2^-\mid S_1)\leq \sum\nolimits_{e\in O_2^-}\delta(\pi_1(e))\\
f(O_4\mid S_2)\leq \sum\nolimits_{e\in O_4}\delta(\pi_2(e));&&f(O_3\mid S_1)\leq \sum\nolimits_{e\in O_3}\delta(\pi_1(e))
\end{eqnarray}
where $\pi_1(\cdot)$ and $\pi_2(\cdot)$ are the two functions defined in Lemma~\ref{lma:definitionofpi}.
\label{lma:boudingmarginalgain}
\end{lemma}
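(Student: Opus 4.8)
The plan is to derive all six inequalities from a single template built on the greedy selection rule of TwinGreedy. For a block $O_X$ whose left-hand side is conditioned on $S\in\{S_1,S_2\}$, I would first apply the submodular decomposition \eqref{eqn:fundamentaleqn} to reduce the set-level statement to a per-element one, $f(O_X\mid S)\le\sum_{e\in O_X}f(e\mid S)$. It then suffices to prove the pointwise bound $f(e\mid S)\le\delta(\pi(e))$ for the appropriate injection $\pi\in\{\pi_1,\pi_2\}$ of Lemma~\ref{lma:definitionofpi}; summing over $e\in O_X$ closes each claim.

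Each pointwise bound couples a greedy-choice comparison with diminishing returns. The guiding observation is that when TwinGreedy inserts an element $g$ into some $S_i$, the recorded gain $\delta(g)=f(g\mid\pre(g,S_i))$ equals the maximum of $f(u\mid S_j)$ over all feasible pairs $(j,u)$ at that iteration. Hence, if $e$ is also a feasible candidate for $S_i$ at that moment (that is, $e\in\M_i$ with $\pre(g,S_i)\cup\{e\}\in\I$), then $\delta(g)\ge f(e\mid\pre(g,S_i))$, and combining this with the submodularity step $f(e\mid S)\le f(e\mid\pre(g,S_i))$ — valid because the relevant $\pre$-set sits inside the final conditioning set and excludes $e$ — yields $f(e\mid S)\le\delta(g)$. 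For the off-diagonal blocks $O_1^-,O_2^-,O_3,O_4$ one has $g=\pi(e)\ne e$, $g\in S$, so $\pre(g,S)\subseteq S$ and the chain $f(e\mid S)\le f(e\mid\pre(g,S))\le f(g\mid\pre(g,S))=\delta(g)$ applies verbatim. For the diagonal blocks $O_1^+,O_2^+$, Lemma~\ref{lma:definitionofpi} gives $\pi(e)=e$ but the conditioning set differs from the set containing $e$, so I would use the slightly shifted chain described next.

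For $e\in O_1^+\subseteq S_1$, the competition happens at the single iteration when $e$ itself enters $S_1$: just before insertion $e\notin S_1\cup S_2$, and its defining property $\pre(e,S_2)\cup\{e\}\in\I$ places $e\in\M_2$; since the algorithm added $e$ to $S_1$ instead, maximality gives $f(e\mid\pre(e,S_1))\ge f(e\mid\pre(e,S_2))$, and together with $f(e\mid S_2)\le f(e\mid\pre(e,S_2))$ (submodularity, $\pre(e,S_2)\subseteq S_2$) this delivers $f(e\mid S_2)\le\delta(e)$. The block $O_2^+$ is symmetric with the roles of $S_1,S_2$ swapped.

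The delicate part, and the main obstacle, is the off-diagonal analysis, where I must certify that $e$ is still unselected — hence a genuine candidate — at the iteration when $g=\pi(e)$ is inserted. For $O_3$ and $O_4$ this is automatic, since those elements never enter $S_1\cup S_2$, so property~1 of Lemma~\ref{lma:definitionofpi}, $\pre(g,S)\cup\{e\}\in\I$, directly places $e$ in the corresponding $\M$ at $g$'s insertion. For $O_1^-$ and $O_2^-$, however, $e$ does lie in $S_1\cup S_2$, so I would first establish that $g$ is inserted strictly before $e$, arguing by contradiction: if $e$ preceded $g$ then $\pre(e,S)\subseteq\pre(g,S)$, whence $\pre(g,S)\cup\{e\}\in\I$ and the hereditary property would force $\pre(e,S)\cup\{e\}\in\I$, contradicting the defining non-independence $\pre(e,S)\cup\{e\}\notin\I$ of these blocks. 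Once this ordering is secured, $e\notin S_1\cup S_2$ holds at $g$'s insertion, so $e\in\M$ there and the greedy-plus-submodularity chain of the second paragraph closes every remaining inequality.
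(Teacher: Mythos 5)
Your proposal is correct and follows essentially the same route as the paper's proof: reduce each block to per-element bounds via Eqn.~\eqref{eqn:fundamentaleqn}, handle $O_1^+,O_2^+$ by the greedy comparison at $e$'s own insertion, handle $O_3,O_4$ via property~1 of Lemma~\ref{lma:definitionofpi}, and for $O_1^-,O_2^-$ establish by contradiction (using the hereditary property) that $e$ is still unselected when $\pi(e)$ is inserted. The paper merely packages these cases as three separate sub-lemmas rather than a single template, but the arguments coincide.
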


The proof of Lemma~\ref{lma:boudingmarginalgain} is deferred to the supplementary file. In the next section, we will provide a proof sketch for a similar lemma (i.e., Lemma~\ref{lma:marginalgainboundfast}), which can also be used to understand Lemma~\ref{lma:boudingmarginalgain}. 
Now we can prove the performance bounds of TwinGreedy:

\begin{theorem}
When $(\N,\I)$ is a matroid, the $\mathsf{TwinGreedy}$ algorithm returns a solution $S^*$ with $\frac{1}{4}$ approximation ratio, under time complexity of $\mathcal{O}(nr)$.
\label{lma:boundoftwingreedy}
\end{theorem}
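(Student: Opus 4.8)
The plan is to prove the sharper statement $f(S_1)+f(S_2)\geq \frac{1}{2}f(O)$, from which the $\frac14$ ratio follows at once, since $f(S^*)=\max\{f(S_1),f(S_2)\}\geq \frac12\left(f(S_1)+f(S_2)\right)$. First I would relate $f(O)$ to the two candidate sets through submodularity: applying $f(X)+f(Y)\geq f(X\cup Y)+f(X\cap Y)$ with $X=O\cup S_1$, $Y=O\cup S_2$, and using $S_1\cap S_2=\emptyset$ (so $X\cap Y=O$) together with $f(\cdot)\geq 0$, gives
\[
f(O)\leq f(O\cup S_1)+f(O\cup S_2)=f(S_1)+f(S_2)+f(O\mid S_1)+f(O\mid S_2).
\]
Hence it suffices to establish $f(O\mid S_1)+f(O\mid S_2)\leq f(S_1)+f(S_2)$.

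Next I would bound each conditional term using the classification of Definition~\ref{def:keydef}. Writing $O\setminus S_2=O_1^+\cup O_1^-\cup\bigl(O\setminus(S_1\cup S_2)\bigr)$ and invoking the subadditivity bound~\eqref{eqn:fundamentaleqn}, I would peel off the elements of $O\setminus(S_1\cup S_2)$ that are \emph{not} in $O_4$: by definition each such $e$ satisfies $S_2\cup\{e\}\in\I$, so at termination $e\in\M_2$, whence either no such element exists (if the repeat-loop halts with $\M_1\cup\M_2=\emptyset$) or $f(e\mid S_2)\leq 0$ by the rejection test at Line~\ref{ln:breakline}. In either case these elements contribute non-positively and can be discarded, leaving $f(O\mid S_2)\leq f(O_1^+\mid S_2)+f(O_1^-\mid S_2)+f(O_4\mid S_2)$; the symmetric argument gives $f(O\mid S_1)\leq f(O_2^+\mid S_1)+f(O_2^-\mid S_1)+f(O_3\mid S_1)$.

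Finally I would sum the six inequalities of Lemma~\ref{lma:boudingmarginalgain} and regroup the right-hand side by which injection appears: the $\pi_1$-terms are charged to $O_1^+\cup O_2^-\cup O_3$ and the $\pi_2$-terms to $O_1^-\cup O_2^+\cup O_4$. Both index sets are subsets of the respective domains in Lemma~\ref{lma:definitionofpi}, so injectivity makes the images $\pi_1(e)$ (resp. $\pi_2(e)$) distinct elements of $S_1$ (resp. $S_2$); since every accepted element has positive marginal gain, the quantities $\delta(\cdot)$ are non-negative, and the telescoping identities yield $\sum_{e}\delta(\pi_1(e))\leq \sum_{s\in S_1}\delta(s)\le f(S_1)$ and $\sum_{e}\delta(\pi_2(e))\leq f(S_2)$. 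Chaining this with the bounds of the previous paragraph closes $f(O\mid S_1)+f(O\mid S_2)\leq f(S_1)+f(S_2)$, completing the ratio. The step I expect to be the main obstacle is exactly this bookkeeping: I must check that every optimal subset is charged to precisely one of the two injections so that no $\delta$-value in $S_1$ or $S_2$ is over-counted, and that the break-termination case is handled correctly, since there $O_3$ and $O_4$ may be proper subsets of $O\setminus(S_1\cup S_2)$ and the discarded elements must be controlled through their non-positive gains. The running time is routine: at most $2r$ elements are ever inserted into $S_1\cup S_2$, and each iteration examines $O(n)$ candidates with a single oracle call each, for $\mathcal{O}(nr)$ total.
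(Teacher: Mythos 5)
Your proposal is correct and follows essentially the same route as the paper's own proof: the same submodularity-plus-nonnegativity step $f(O)\leq f(O\cup S_1)+f(O\cup S_2)$, the same decomposition of $f(O\mid S_i)$ via Definition~\ref{def:keydef} (your ``discarded'' elements are exactly the paper's $O_5$ and $O_6$, controlled by the termination/rejection test), and the same charging of the six bounds of Lemma~\ref{lma:boudingmarginalgain} through the injections of Lemma~\ref{lma:definitionofpi} to obtain $f(O)\leq 2(f(S_1)+f(S_2))\leq 4f(S^*)$. The only cosmetic difference is that the paper additionally isolates the degenerate case $S_1=\emptyset$ or $S_2=\emptyset$ (showing the output is then optimal), whereas your argument handles it implicitly.
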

\begin{proof}
If $S_1=\emptyset$ or $S_2=\emptyset$, then we get an optimal solution, which is shown in the supplementary file. So we assume $S_1\neq \emptyset$ and $S_2\neq \emptyset$. Let $O_5=O\backslash (S_1\cup S_2\cup O_3)$ and $O_6=O\backslash (S_1\cup S_2\cup O_4)$. By Eqn.~\eqref{eqn:fundamentaleqn}, we get
\begin{eqnarray}
&&f(O\cup S_1)-f(S_1)\leq f(O_2^+ \mid S_1)+f(O_2^- \mid S_1) + f(O_3 \mid S_1) + f(O_5 \mid S_1)  \label{eqn:tocombineS1}\\
&&f(O\cup S_2)-f(S_2)\leq f(O_1^+ \mid S_2)+ f(O_1^- \mid S_2)+ f(O_4 \mid S_2) + f(O_6 \mid S_2) \label{eqn:tocombineS2}
\end{eqnarray}
Using Lemma~\ref{lma:boudingmarginalgain}, we can get
\begin{eqnarray}
&&f(O_2^+ \mid S_1)+f(O_2^- \mid S_1) + f(O_3 \mid S_1) +f(O_1^+ \mid S_2)+ f(O_1^- \mid S_2)+ f(O_4 \mid S_2)  \nonumber\\
&&\leq \sum\nolimits_{e\in O_1^+\cup O_2^-\cup O_3}\delta(\pi_1(e))+\sum\nolimits_{e\in O_1^-\cup O_2^+\cup O_4}\delta(\pi_2(e)) \nonumber\\
&&\leq \sum\nolimits_{e\in S_1}\delta(e)+\sum\nolimits_{e\in S_2}\delta(e)\leq f(S_1)+f(S_2),  \label{eqn:tocombineS1S2}
\end{eqnarray}
where the second inequality is due to that $\pi_1: O_1^+\cup O_1^-\cup O_2^-\cup O_3 \mapsto S_1$ and $\pi_2: O_1^-\cup O_2^+\cup O_2^-\cup O_4 \mapsto S_2$ are both injective functions shown in Lemma~\ref{lma:definitionofpi}, and that $\delta(e)>0$ for every $e\in S_1\cup S_2$ according to Line~\ref{ln:breakline} of TwinGreedy. Besides, according to the definition of $O_5$, we must have $f(e\mid S_1)\leq 0$ for each $e\in O_5$, because otherwise $e$ should be added into $S_1\cup S_2$ as $S_1\cup \{e\}\in \I$. Similarly, we get $f(e\mid S_2)\leq 0$ for each $e\in O_6$. Therefore, we have
\begin{eqnarray}
f(O_5 \mid S_1)\leq \sum\nolimits_{e\in O_5} f(e\mid S_1)\leq 0;~~f(O_6 \mid S_2)\leq \sum\nolimits_{e\in O_6} f(e\mid S_2)\leq 0
\end{eqnarray}

Meanwhile, as $f(\cdot)$ is a non-negative submodular function and $S_1\cap S_2=\emptyset$, we have
\begin{eqnarray}
f(O) \leq f(O) +f(O\cup S_1\cup S_2) \leq f(O\cup S_1)+f(O\cup S_2)    \label{eqn:tocombinefo}
\end{eqnarray}
By summing up Eqn.~\eqref{eqn:tocombineS1}-Eqn.~\eqref{eqn:tocombinefo} and simplifying, we get
\begin{eqnarray}
f(O)\leq 2f(S_1) +2f(S_2) \leq 4f(S^*)
\end{eqnarray}
which completes the proof on the $1/4$ ratio. Finally, the $\mathcal{O}(nr)$ time complexity is evident, as $|S_1|+|S_2|\leq 2r$ and adding one element into $S_1$ or $S_2$ has at most $\mathcal{O}(n)$ time complexity.
\end{proof}

Interestingly, when the objective function $f(\cdot)$ is monotone, the proof of Theorem~\ref{lma:boundoftwingreedy} shows that the TwinGreedy algorithm achieves a $1/2$ approximation ratio due to $f(O\cup S_1)+f(O\cup S_2)\geq 2f(O)$. Therefore, we rediscover the $1/2$ deterministic ratio proposed by~\citet{fisher1978analysis} for monotone submodular maximization over a matroid.

\section{The TwinGreedyFast Algorithm} \label{sec:twingreedyfast}

As the TwinGreedy algorithm still has quadratic running time, we present a more efficient algorithm TwinGreedyFast (Alg. 2). As that in TwinGreedy, the TwinGreedyFast algorithm also maintains two solution sets $S_1$ and $S_2$, but it uses a threshold to control the quality of elements added into $S_1$ or $S_2$. More specifically, given a threshold $\tau$, TwinGreedyFast checks every unselected element $e\in \N\backslash (S_1\cup S_2)$ in an arbitrarily order. Then it chooses $S_i\in \{S_1, S_2\} $ such that adding $e$ into $S_i$ does not violate the feasibility of $\I$ while $f(e\mid S_i)$ is maximized. If the marginal gain $f(e\mid S_i)$ is no less than $\tau$, then $e$ is added into $S_i$, otherwise the algorithm simply neglects $e$. The algorithm repeats the above process starting from $\tau=\tau_{max}$ and decreases $\tau$ by a factor $(1+\epsilon)$ at each iteration until $\tau$ is sufficiently small, then it returns the one between $S_1$ and $S_2$ with the larger objective value.

\begin{algorithm} [t]
    $\tau_{max}\gets \max\{f(e): e\in \N\wedge \{e\}\in \I\}$;\\
    $S_1\gets \emptyset; S_2\gets \emptyset;$\\

    \For{$\mathrm{(}\tau\gets \tau_{max};~~\tau> {\epsilon \tau_{max}}/{[r(1+\epsilon)]};~~\tau\gets \tau/(1+\epsilon)\mathrm{)}$\label{ln:forloopfast}}{
        \ForEach{$e\in \N\backslash (S_1\cup S_2)$}{
            $\Delta_1\gets -\infty; \Delta_2\gets -\infty$ ~~~ /*two signals*/\\
            \lIf{$S_1\cup \{e\}\in \I$}{
                $\Delta_1\gets f(e\mid S_1)$
            }
            \lIf{$S_2\cup \{e\}\in \I$}{
                $\Delta_2\gets f(e\mid S_2)$
            }
                $i\gets \arg\max_{j\in \{1,2\}}\Delta_j$; (ties broken arbitrarily)\\
                \lIf{$\Delta_i\geq \tau$}{
                     $S_i\gets S_i\cup \{e\}$
                }
        }
    }
        $S^*\gets \arg\max_{X\in \{S_1,S_2\}}f(X)$ \label{ln:determineS}\\
    \Return{$S^*$}
    \caption{$\mathsf{TwinGreedyFast}(\N,\I, f(\cdot),\epsilon)$}
    \label{alg:twingreedyfast}
\end{algorithm}

The performance analysis of TwinGreedyFast is similar to that of TwinGreedy. Let us consider the two solution sets $S_1$ and $S_2$ when TwinGreedyFast returns. We can define $O_1^+$, $O_1^-$, $O_2^+$, $O_2^-$, $O_3$, $O_4$, $\pre(e,S_i)$ and $\delta(e)$ in exact same way as Definition~\ref{def:keydef}, and Lemma~\ref{lma:definitionofpi} still holds as the construction of $\pi_1(\cdot)$ and $\pi_2(\cdot)$ only depends on the insertion order of the elements in $S_1\cup S_2$, which is fixed when TwinGreedyFast returns. We also try to bound the marginal gains of $O_1^+$-$O_4$ with respect to $S_1$ and $S_2$ in a way similar to Lemma~\ref{lma:boudingmarginalgain}. However, due to the thresholds introduced in TwinGreedyFast, these bounds are slightly different from those in Lemma~\ref{lma:boudingmarginalgain}, as shown in Lemma~\ref{lma:marginalgainboundfast}:

\begin{lemma}
The TwinGreedyFast algorithm satisfies:
\begin{eqnarray}
f(O_1^+\mid S_2)\leq \sum\nolimits_{e\in O_1^+}\delta(\pi_1(e));~~~~~~~~~~~&&f(O_2^+\mid S_1)\leq \sum\nolimits_{e\in O_2^+}\delta(\pi_2(e))\\
f(O_1^-\mid S_2)\leq (1+\epsilon)\sum\nolimits_{e\in O_1^-}\delta(\pi_2(e));&&f(O_2^-\mid S_1)\leq (1+\epsilon)\sum\nolimits_{e\in O_2^-}\delta(\pi_1(e))\\
f(O_4\mid S_2)\leq (1+\epsilon)\sum\nolimits_{e\in O_4}\delta(\pi_2(e));&&f(O_3\mid S_1)\leq (1+\epsilon)\sum\nolimits_{e\in O_3}\delta(\pi_1(e))
\end{eqnarray}
where $\pi_1(\cdot)$ and $\pi_2(\cdot)$ are the two functions defined in Lemma~\ref{lma:definitionofpi}.
\label{lma:marginalgainboundfast}
\end{lemma}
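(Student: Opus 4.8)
The plan is to establish each of the six inequalities in Lemma~\ref{lma:marginalgainboundfast} by comparing, for every element $e$ in the relevant subset of $O$, its marginal gain with respect to $S_1$ or $S_2$ against the quantity $\delta(\pi_1(e))$ or $\delta(\pi_2(e))$ recorded when the corresponding image element was accepted. The two key facts that drive all six bounds are: first, the injections $\pi_1,\pi_2$ from Lemma~\ref{lma:definitionofpi} guarantee feasibility at the insertion time of the image element, i.e.\ $\pre(\pi_1(e),S_1)\cup\{e\}\in\I$ (and symmetrically for $\pi_2$); and second, submodularity together with the thresholding rule of TwinGreedyFast controls the per-element gain. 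Since $\pi_1,\pi_2$ are injective, summing the per-element bounds and using Eqn.~\eqref{eqn:fundamentaleqn} to telescope $f(O_\bullet\mid S_i)$ into a sum of singleton marginals will yield the stated aggregate inequalities.

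I would split into two cases according to whether $\pi_j(e)=e$ or not. \textbf{Case where $e$ maps to itself} (the first pair of inequalities, covering $e\in O_1^+$ under $\pi_1$ and $e\in O_2^+$ under $\pi_2$): here $\pi_1(e)=e\in S_1$, so $\delta(\pi_1(e))=f(e\mid\pre(e,S_1))$. For the bound $f(e\mid S_2)\le\delta(\pi_1(e))$, the argument is that when $e$ was accepted into $S_1$, it was feasible to add to $S_1$ and also (by the defining condition of $O_1^+$, namely $\pre(e,S_2)\cup\{e\}\in\I$) feasible to add to $S_2$; the greedy/threshold choice picked $S_1$ with gain $f(e\mid\pre(e,S_1))$, and since this choice maximized $\Delta_j$ over $j\in\{1,2\}$ at that moment, we get $f(e\mid\pre(e,S_2))\le f(e\mid\pre(e,S_1))=\delta(e)$. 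Submodularity then gives $f(e\mid S_2)\le f(e\mid\pre(e,S_2))$ because $\pre(e,S_2)\subseteq S_2$. No $(1+\epsilon)$ factor appears because both options were compared at the same threshold level. \textbf{Case where $e$ maps to a distinct element} (the remaining four inequalities, for $O_1^-,O_2^-,O_3,O_4$): here $\pi_1(e)\neq e$ in general, and the key is that at the moment $\pi_1(e)$ was added to $S_1$, $e$ was feasible for $S_1$ (by property~1 of Lemma~\ref{lma:definitionofpi}) but was \emph{not} accepted then or later, so its gain must have been below the then-current threshold. I would bound $f(e\mid S_1)\le f(e\mid\pre(\pi_1(e),S_1))\le\tau_{\text{cur}}$ where $\tau_{\text{cur}}$ is the threshold active when $\pi_1(e)$ was inserted, and then relate $\tau_{\text{cur}}$ to $\delta(\pi_1(e))=f(\pi_1(e)\mid\pre(\pi_1(e),S_1))$. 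Since $\pi_1(e)$ was itself accepted at that threshold, $\delta(\pi_1(e))\ge\tau_{\text{cur}}$, and because thresholds decrease by a factor $(1+\epsilon)$ per round, the gain of $e$ at the time it was \emph{last} rejected is at most $(1+\epsilon)$ times the threshold that accepted $\pi_1(e)$; this produces exactly the $(1+\epsilon)$ factor in the statement.

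The main obstacle I anticipate is the careful bookkeeping of thresholds in the second case: one must argue that $f(e\mid S_1)$ (the final marginal, with respect to the \emph{returned} $S_1$) is bounded by the threshold in effect at the precise round when $\pi_1(e)$ was accepted, and that this threshold is within a $(1+\epsilon)$ factor of $\delta(\pi_1(e))$. This requires tracking \emph{when} $e$ is examined relative to \emph{when} $\pi_1(e)$ is accepted: if $e$ is examined at a round with threshold $\tau'$ and rejected, then $f(e\mid S_1^{\text{at that time}})<\tau'$, and since marginals only decrease as $S_1$ grows (submodularity) while $\tau'$ only decreases across rounds, one must pick the right reference round to get the cleanest factor. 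I would handle this by fixing the round in which $\pi_1(e)$ was added, noting $\delta(\pi_1(e))\ge\tau$ for that round's threshold $\tau$, and observing that in that same round $e$ was still feasible and available yet not selected, forcing $f(e\mid\pre(\pi_1(e),S_1))<(1+\epsilon)\tau\le(1+\epsilon)\delta(\pi_1(e))$ after accounting for the one-step threshold decrease. Once each per-element inequality is in place, the final assembly via injectivity and Eqn.~\eqref{eqn:fundamentaleqn} is routine.
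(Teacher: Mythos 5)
Your overall strategy coincides with the paper's: per-element comparison through the maps of Lemma~\ref{lma:definitionofpi}, the greedy/threshold selection rule for the clean bound on $O_1^+,O_2^+$, and the geometric threshold decay for the $(1+\epsilon)$ factor on the remaining four sets. Your first case ($\pi_1(e)=e$ for $e\in O_1^+$, symmetrically $O_2^+$) is handled exactly as in the paper and is fine.

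There is, however, a genuine gap in your second case as stated. You assert that for $e\in O_1^-\cup O_2^-\cup O_3\cup O_4$ the element ``was not accepted then or later.'' That is false for $e\in O_1^-$ and $e\in O_2^-$: these elements \emph{are} accepted --- $O_1^-\subseteq S_1$ and $O_2^-\subseteq S_2$ --- and what the argument actually needs is that $e$ has not yet been accepted into the \emph{other} set at the moment its image is inserted. This matters because TwinGreedyFast only examines elements of $\N\setminus(S_1\cup S_2)$; once $e$ sits in $S_1$ it is never again a contender for $S_2$, and the ``$e$ would have been added at an earlier, larger threshold'' contradiction collapses. The paper closes this by showing that if $e\in O_1^-$ were already in $S_1$ when $\pi_2(e)$ enters $S_2$, then $\pre(e,S_2)\subseteq\pre(\pi_2(e),S_2)$, so $\pre(\pi_2(e),S_2)\cup\{e\}\in\I$ would force $\pre(e,S_2)\cup\{e\}\in\I$ by the hereditary property, contradicting the definition of $O_1^-$; you need this step and its mirror for $O_2^-$. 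A second, smaller omission: your threshold bookkeeping appeals to the round preceding the one in which $\pi_1(e)$ (resp.\ $\pi_2(e)$) was accepted, but when that acceptance happens at $\tau=\tau_{max}$ no earlier round exists; the paper treats this separately by noting that $f(e)\geq f(e\mid\pre(\pi_2(e),S_2))>(1+\epsilon)\tau_{max}$ would contradict $f(e)\leq\tau_{max}$. With these two repairs your argument matches the paper's proof.
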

\begin{proof} (sketch) 
At the moment that any $e\in O_1^+$ is added into $S_1$, it can also be added into $S_2$ due to Definition~\ref{def:keydef}. Therefore, we must have $f(e\mid \pre(e,S_2))\leq \delta(e)$ according to the greedy selection rule of TwinGreedyFast. Using submodularity, we get the first inequality in the lemma:
\begin{eqnarray}
f(O_1^+\mid S_2)\leq \sum\nolimits_{e\in O_1^+}f(e\mid S_2)\leq \sum\nolimits_{e\in O_1^+}f(e\mid \pre(e,S_2))\leq \sum\nolimits_{e\in O_1^+}\delta(\pi_1(e)) \nonumber
\end{eqnarray}
For any $e\in O_1^-$, consider the moment that TwinGreedyFast adds $\pi_2(e)$ into $S_2$. According to Lemma~\ref{lma:definitionofpi}, we have $\pre(\pi_2(e),S_2)\cup \{e\}\in \I$. This implies that $e$ has not been added into $S_1$ yet, because otherwise we have $\pre(e,S_2)\subseteq \pre(\pi_2(e),S_2)$ and hence $\pre(e,S_2)\cup \{e\}\in \I$ according to the hereditary property of independence systems, which contradicts $e\in O_1^-$. As such, we must have $\delta(\pi_2(e))\geq \tau$ and $f(e\mid \pre(\pi_2(e),S_2))\leq {(1+\epsilon)\tau}$ where $\tau$ is the threshold used by  TwinGreedyFast when adding $\pi_2(e)$, because otherwise $e$ should have been added before $\pi_2(e)$ in {an earlier stage of the algorithm with a larger threshold}. Combining these results gives us
\begin{eqnarray}
f(O_1^-\mid S_2)\leq \sum\nolimits_{e\in O_1^-} f(e\mid S_2)\leq \sum\nolimits_{e\in O_1^-} f(e\mid \pre(\pi_2(e),S_2))\leq (1+\epsilon)\sum\nolimits_{e\in O_1^-}\delta(\pi_2(e)) \nonumber
\end{eqnarray}
The other inequalities in the lemma can be proved similarly.
\end{proof}

With Lemma~\ref{lma:marginalgainboundfast}, we can use similar reasoning as that in Theorem~\ref{lma:boundoftwingreedy} to prove the performance bounds of TwinGreedyFast, as shown in Theorem~\ref{thm:ratiooffast}. The full proofs of Lemma~\ref{lma:marginalgainboundfast} and Theorem~\ref{thm:ratiooffast} can be found in the supplementary file.
\begin{theorem}
When $(\N,\I)$ is a matroid, the $\mathsf{TwinGreedyFast}$ algorithm returns a solution $S^*$ with $\frac{1}{4}-\epsilon$ approximation ratio, under time complexity of $\mathcal{O}(\frac{n}{\epsilon}\log \frac{r}{\epsilon})$.
\label{thm:ratiooffast}
\end{theorem}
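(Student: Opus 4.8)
The plan is to follow the proof of Theorem~\ref{lma:boundoftwingreedy} almost verbatim, substituting Lemma~\ref{lma:marginalgainboundfast} for Lemma~\ref{lma:boudingmarginalgain} and re-bounding the two ``leftover'' terms that were previously nonpositive. Because Definition~\ref{def:keydef}, Lemma~\ref{lma:definitionofpi}, and the three inequalities Eqn~\eqref{eqn:tocombineS1}, Eqn~\eqref{eqn:tocombineS2} and Eqn~\eqref{eqn:tocombinefo} rely only on submodularity and on the final insertion order of $S_1\cup S_2$ (never on \emph{how} a selected element was chosen), they transfer to TwinGreedyFast without change. First I would dispose of the degenerate case $S_1=\emptyset$ or $S_2=\emptyset$ (treated as in the supplementary argument for TwinGreedy), then assume both sets nonempty, put $O_5=O\backslash(S_1\cup S_2\cup O_3)$ and $O_6=O\backslash(S_1\cup S_2\cup O_4)$, and sum Eqn~\eqref{eqn:tocombineS1} and Eqn~\eqref{eqn:tocombineS2}. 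Collecting the six resulting marginal-gain terms and applying Lemma~\ref{lma:marginalgainboundfast}, I would group the $\pi_1$- and $\pi_2$-images as in Eqn~\eqref{eqn:tocombineS1S2}; since every $\delta(\cdot)\geq 0$ the two factor-free terms ($O_1^+$ and $O_2^+$) can be inflated to carry the same $(1+\epsilon)$ factor, and then injectivity of $\pi_1,\pi_2$ together with $\sum_{e\in S_i}\delta(e)=f(S_i)-f(\emptyset)\leq f(S_i)$ bounds all six terms by $(1+\epsilon)\big(f(S_1)+f(S_2)\big)$.

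The hard part, and the sole genuine departure from Theorem~\ref{lma:boundoftwingreedy}, is bounding $f(O_5\mid S_1)$ and $f(O_6\mid S_2)$. In TwinGreedy these are nonpositive because the search continues until no feasible element has positive gain; TwinGreedyFast instead halts once $\tau$ drops below $\epsilon\tau_{max}/[r(1+\epsilon)]$, so a rejected element can retain a small positive gain. The key observation is that each $e\in O_5$ has $S_1\cup\{e\}\in\I$ (as $e\notin O_3$), so by the hereditary property $e$ stayed feasible for $S_1$ throughout the run yet was never accepted; hence at the smallest threshold $\tau_{\min}$ we must have had $f(e\mid S_1)<\tau_{\min}$ at the time of inspection, and submodularity extends this bound to the final $S_1$. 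As $\tau_{\min}\leq\epsilon\tau_{max}/r$ and $|O_5|\leq|O|\leq r$, this yields $f(O_5\mid S_1)\leq\sum_{e\in O_5}f(e\mid S_1)\leq\epsilon\tau_{max}$, and symmetrically $f(O_6\mid S_2)\leq\epsilon\tau_{max}$.

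It then remains only to relate $\tau_{max}$ to the optimum, which is immediate: the maximizer $e^*$ of $\tau_{max}=\max\{f(\{e\}):\{e\}\in\I\}$ is itself a feasible singleton, so optimality of $O$ gives $\tau_{max}=f(\{e^*\})\leq f(O)$. Assembling everything through Eqn~\eqref{eqn:tocombinefo} exactly as in Theorem~\ref{lma:boundoftwingreedy}, I obtain $f(O)\leq(2+\epsilon)\big(f(S_1)+f(S_2)\big)+2\epsilon\tau_{max}$; substituting $f(S_1)+f(S_2)\leq 2f(S^*)$ and $\tau_{max}\leq f(O)$ gives $(1-2\epsilon)f(O)\leq(4+2\epsilon)f(S^*)$, whence $f(S^*)\geq\tfrac{1-2\epsilon}{4+2\epsilon}f(O)\geq(\tfrac14-\epsilon)f(O)$ for $\epsilon$ small, establishing the ratio.

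For the running time, the loop of Line~\ref{ln:forloopfast} scales $\tau$ down from $\tau_{max}$ by a factor $1+\epsilon$ each pass until it falls below $\epsilon\tau_{max}/[r(1+\epsilon)]$, so it executes $\log_{1+\epsilon}\!\big(r(1+\epsilon)/\epsilon\big)=\mathcal{O}\big(\tfrac1\epsilon\log\tfrac r\epsilon\big)$ passes; each pass inspects the $\mathcal{O}(n)$ unselected elements with $\mathcal{O}(1)$ marginal-gain and independence queries each, giving the claimed $\mathcal{O}\big(\tfrac n\epsilon\log\tfrac r\epsilon\big)$ bound.
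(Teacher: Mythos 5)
Your proof is correct and takes essentially the same route as the paper's, which establishes Theorem~\ref{thm:ratiooffast} as the $p=1$ case of its $p$-set-system bound (Theorem~\ref{thm:boundforpsystem}): the identical decomposition via $O_1^{\pm},O_2^{\pm},O_3,\dots,O_6$, the same uniform $(1+\epsilon)$ inflation of the six terms from Lemma~\ref{lma:marginalgainboundfast}, the same bound $f(O_5\mid S_1)+f(O_6\mid S_2)\leq 2r\tau_{\min}\leq 2\epsilon\tau_{\max}\leq 2\epsilon f(O)$, and the same final algebra $f(S^*)\geq \frac{1-2\epsilon}{4+2\epsilon}f(O)\geq (\frac{1}{4}-\epsilon)f(O)$. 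The only nit is the degenerate case: when $S_2=\emptyset$ the TwinGreedy supplementary argument you cite must itself be adapted with the same $\tau_{\min}$ bound (yielding a $1-\epsilon$ ratio rather than exact optimality, since rejected elements may retain gain up to $\tau_{\min}$), exactly the adjustment you already make for $O_5$ and $O_6$.
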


\paragraph{Extensions:} The TwinGreedyFast algorithm can also be directly used to address the problem of non-monotone submodular maximization subject to a $p$-set system constraint (by simply inputting a $p$-set system $(\N,\I)$ into TwinGreedyFast). It achieves a $\frac{1}{2p+2}-\epsilon$ deterministic ratio under $\mathcal{O}(\frac{n}{\epsilon}\log \frac{r}{\epsilon})$ time complexity in such a case, which improves upon both the ratio and time complexity of the results in~\citep{mirzasoleiman2016fast,gupta2010constrained}. We note that the prior best known result for this problem is a $\left(\frac{1}{p+2\sqrt{p}+3}-\epsilon\right)$-approximation under $\mathcal{O}((nr+r/\epsilon)\sqrt{p})$ time complexity proposed in~\citep{feldman2017greed}. Compared to this best known result, TwinGreedyFast achieves much smaller time complexity, and also has a better approximation ratio when $p\leq 5$. The proof for this performance ratio of TwinGreedyFast (shown in the supplementary file) is almost the same as those for Theorems~\ref{lma:boundoftwingreedy}-\ref{thm:ratiooffast}, as we only need to relax Lemma~\ref{lma:definitionofpi} to allow that the preimage by $\pi_1(\cdot)$ or $\pi_2(\cdot)$ of any element in $S_1\cup S_2$ contains at most $p$ elements.

\section{Performance Evaluation} \label{sec:exp}

In this section, we evaluate the performance of TwinGreedyFast under two social network applications, using both synthetic and real networks. We implement the following algorithms for comparison:

\begin{enumerate}
\item \textbf{SampleGreedy}: This algorithm is proposed in~\citep{feldman2017greed} which has $1/4$ expected ratio and $\mathcal{O}(nr)$ running time. To the best of our knowledge, it is currently the fastest randomized algorithm for non-monotone submodular maximization over a matroid.
\item \textbf{Fantom}: Excluding the $1/4-\epsilon$ ratio proposed in~\citep{lee2010maximizing}, the Fantom algorithm from~\citep{mirzasoleiman2016fast} has the best deterministic ratio for our problem. 
    As it needs to call an unconstrained submodular maximization (USM) algorithm, we use the USM algorithm proposed in~\citep{buchbinder2015tight} with $1/3$ deterministic ratio and linear running time. As such, Fantom achieves $1/7$ deterministic ratio in the experiments.\footnote{We have also tested Fantom using the randomized USM algorithm in \citep{buchbinder2015tight} with $1/2$ expected ratio. The experimental results are almost identical, although Fantom has a larger ${1}/{6}$ ratio (in expectation) in such a case.}
\item \textbf{ResidualRandomGreedy}: This algorithm is from~\citep{buchbinder2014submodular} which has $1/4$ expected ratio and $\mathcal{O}(nr\log n)$ running time. We denote it as ``RRG'' for brevity.
\item \textbf{TwinGreedyFast}: We implement our Algorithm 2 by setting $\epsilon=0.1$. As such, it achieves 0.15 deterministic approximation ratio in the experiments.
\end{enumerate}


\subsection{Applications}

Given a social network $G=(V,E)$ where $V$ is the set of nodes and $E$ is the set of edges, we consider the following two applications in our experiments. Both applications are instances of non-monotone submodular maximization subject to a matroid constraint, which is proved in the supplementary file.

\begin{enumerate}
\item Social Network Monitoring: This application is similar to those applications considered in~\citep{leskovec2007cost} and \citep{kuhnle2019interlaced}. Suppose that each edge $(u,v)\in E$ is associated with a weight $w(u,v)$ denoting the maximum amount of content that can be propagated through $(u,v)$. Moreover, $V$ is partitioned into $h$ disjoint subsets $V_1,V_2,\cdots, V_h$ according to the users' properties such as ages and political leanings. We need to select a set of users $S\subseteq V$ to monitor the network, such that the total amount of monitored content $f(S)=\sum_{(u,v)\in E, u\in S, v\notin S}w(u,v)$ is maximized. Due to the considerations on diversity or fairness, we also require $\forall i\in [h]: |S\cap V_i|\leq k$, where $k$ is a predefined constant.

\item Multi-Product Viral Marketing: This application is a variation of the one considered in~\citep{chen2020gradient}. Suppose that a company with a budget $B$ needs to select a set of at most $k$ seed nodes to promote $m$ products. Each node $u\in V$ can be selected as a seed for at most one product and also has a cost $c(u)$ for serving as a seed. The goal is to maximize the revenue (with budget-saving considerations): $\sum\nolimits_{i\in [m]}f_i(S_i)+\left(B-\sum_{i\in [m]}\sum\nolimits_{v\in S_i}c(v)\right)$, where $S_i$ is the set of seed nodes selected for product $i$, and  $f_i(\cdot)$ is a monotone and submodular influence spread function as that proposed in~\citep{kempe2003maximizing}. We also assume that $B$ is large enough to keep the revenue non-negative, and assume that selecting no seed nodes would cause zero revenue. 
\end{enumerate}

\begin{figure}[t]
  \centering
  \includegraphics[width=1\textwidth]{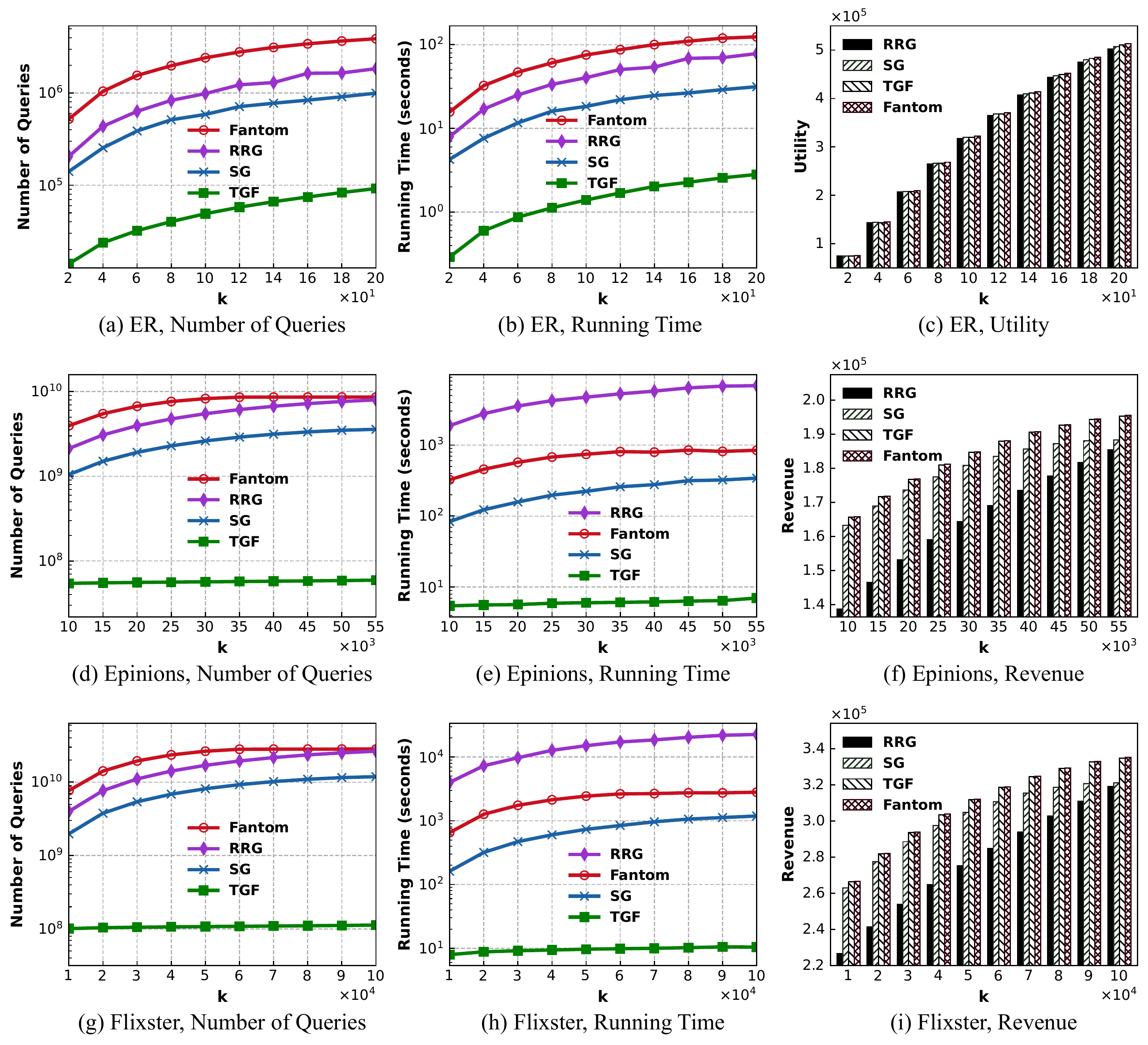}
  \caption{Experimental results for social network monitoring and multi-product viral marketing, where ``TGF'' and ``SG'' are abbreviations for ``TwinGreedyFast'' and ``SampleGreedy'', respectively.}
  \label{fig:exp}
\end{figure}

\subsection{Experimental Results}

The experimental results are shown in Fig.~\ref{fig:exp}. In overview, TwinGreedyFast runs more than an order of magnitude faster than the other three algorithms; Fantom has the best performance on utility; the performance of TwinGreedyFast on utility is close to that of Fantom. 

\subsubsection{Social Network Monitoring}

We generate an Erdős–Rényi (ER) random graph with 3000 nodes and set the edge probability to $0.5$. The weight $w(u,v)$ for each $(u,v)\in E$ is generated uniformly at random from $[0,1]$ and all nodes are randomly assigned to five groups. In the supplementary file, we also generate Barabási–Albert (BA) graphs for comparison, and the results are qualitatively similar. It can be seen from Fig.~\ref{fig:exp}(a) that Fantom incurs the largest number of queries to objective function, as it leverages repeated greedy searching processes to find a solution with good quality. SampleGreedy runs faster than RRG, which coincides with their time complexity mentioned in Section~\ref{sec:relatedwork}. TwinGreedyFast significantly outperforms Fantom, RRG and SampleGreedy by more than an order of magnitude in Fig.~\ref{fig:exp}(a) and Fig.~\ref{fig:exp}(b), as it achieves nearly linear time complexity. Moreover, it can be seen from Figs.~\ref{fig:exp}(a)-(b) that TwinGreedyFast maintains its advantage on efficiency no matter the metric is wall-clock running time or number of queries. 
Finally, Fig.~\ref{fig:exp}(c) shows that all the implemented algorithms achieve approximately the same utility on the ER random graph.


\subsubsection{Multi-Product Viral Marketing}

We use two real social networks Flixster and Epinions. Flixster is from~\citep{barbieri2013topic} with 137925 nodes and 2538746 edges, and Epinions is from the SNAP dataset collection~\citep{leskovec2014snap} with 75879 nodes and 508837 edges. We consider three products and follow the Independent Cascade (IC) model \citep{kempe2003maximizing} to set the influence spread function $f_i(\cdot)$ for each product $i$. Note that the IC model requires an activation probability $p_{u,v}$ associated with each edge $(u,v)\in E$. This probability is available in the Flixster dataset (learned from real users' action logs), and we follow \citet{chen2009efficient} to set $p_{u,v}=1/|N_{in}(v)|$ for the Epinions dataset,  where $N_{in}(v)$ is the set of in-neighbors of $v$. As evaluating $f_i(\cdot)$ under the IC model is NP-hard, we follow the approach in \citep{borgs2014maximizing} to generate a set of one million random Reverse-Reachable Sets (RR-sets) such that $f_i(\cdot)$ can be approximately evaluated using these RR-sets. The cost $c(u)$ of each node is generated uniformly at random from $[0,1]$, and we set $B=m\sum_{u\in V}c(u)$ to keep the function value non-negative. More implementation details can be found in the supplementary file.

We plot the experimental results on viral marketing in Fig.~\ref{fig:exp}(d)-Fig.~\ref{fig:exp}(i). The results are qualitatively similar to those in Fig.~\ref{fig:exp}(a)-Fig.~\ref{fig:exp}(c), which show that TwinGreedyFast is still significantly faster than the other algorithms. 
Fantom achieves the largest utility again, while TwinGreedyFast performs closely to Fantom and outperforms the other two randomized algorithms on utility. 


\section{Conclusion and Discussion}

We have proposed the first deterministic algorithm to achieve an approximation ratio of $1/4$ for maximizing a non-monotone, non-negative submodular function subject to a matroid constraint, and our algorithm can also be accelerated to achieve nearly-linear running time. In contrast to the existing algorithms adopting the ``repeated greedy-search'' framework proposed by~\citep{gupta2010constrained}, our algorithms are designed based on a novel ``simultaneous greedy-search'' framework, where two candidate solutions are constructed simultaneously, and a pair of an element and a candidate solution is greedily selected at each step to maximize the marginal gain. Moreover, our algoirthms can also be directly used to handle a more general $p$-set system constraint or monotone submodular functions, while still achieving nice performance bounds. For example, by similar reasoning with that in Sections~\ref{sec:twingreedy}-\ref{sec:twingreedyfast}, it can be seen that our algorithms can achieve $\frac{1}{p+1}$ ratio for monotone $f(\cdot)$ under a $p$-set system constraint, which is almost best possible~\citep{badanidiyuru2014fast}. We have evaluated the performance of our algorithms in two concrete applications for social network monitoring and multi-product viral marketing, and the extensive experimental results demonstrate that our algorithms runs in orders of magnitude faster than the state-of-the-art algorithms, while achieving approximately the same utility.

\section*{Acknowledgements}
This work was supported by the National Key R$\&$D Program of China under Grant No. 2018AAA0101204, the National Natural Science Foundation of China (NSFC) under Grant No. 61772491 and Grant No. U1709217, Anhui Initiative in Quantum Information Technologies under Grant No. AHY150300, and the Fundamental Research Funds for the Central Universities.

%

\section*{Broader Impact}

Submodular optimization is an important research topic in data mining, machine learning and optimization theory, as it has numerous applications such as crowdsourcing~\citep{singla2016noisy}, viral marketing~\citep{kempe2003maximizing}, feature selection~\citep{fujii2019beyond}, network monitoring~\citep{leskovec2007cost}, document summarization~\citep{lin2011class,el2011beyond}, online advertising~\citep{soma2018maximizing}, crowd teaching~\citep{singla2014near} and blogosphere mining~\citep{el2009turning}. Matroid is a fundamental structure in combinatorics that captures the essence of a notion of "independence" that generalizes linear independence in vector spaces. The matroid structure has been pervasively found in various areas such as geometry, network theory, coding theory and graph theory~\citep{bryant1980independence}. The study on submodular maximization subject to matroid constraints dates back to the 1970s (e.g., \citep{fisher1978analysis}), and it is still a hot research topic today~\citep{mirzasoleiman2016fast,feldman2017greed,mirzasoleiman2018streaming,balkanski2019optimal,chekuri2019parallelizing,balcan2018submodular}. 
A lot of practical problems can be cast as the problem of submodular maximziation over a matroid constraint (or more general $p$-set system constraints), such as diversity maximization~\citep{abbassi2013diversity}, video summarization~\citep{xu2015gaze}, clustering~\citep{liu2013entropy}, multi-robot allocation~\citep{williams2017decentralized} and planning sensor networks~\citep{corah2018distributed}. Therefore, our study has addressed a general and fundamental theoretical problem with many potential applications.

Due to the massive datasets used everywhere nowadays, it is very important that submodular optimization algorithms should achieve accuracy and efficiency simultaneously. Recently, there emerge great interests on designing more practical and efficient algorithms for submodular optimization (e.g., \citep{badanidiyuru2014fast,buchbinder2017comparing,balkanski2018non,mirzasoleiman2015lazier,kuhnle2019interlaced,fahrbach2019non,buchbinder2014submodular}), and our work advances the state of the art in this area by proposing a new efficient algorithm with improved performance bounds. Moreover, our algorithms are based on a novel ``simultaneous greedy-search'' framework, which is different from the classical ``repeated greedy-search'' and ``local search'' frameworks adopted by the state-of-the-art algorithms (e.g., \citep{gupta2010constrained,lee2010maximizing, feldman2017greed, mirzasoleiman2016fast}). We believe that our ``simultaneous greedy-search'' framework has the potential to be extended to address other problems on submodular maximization with more complex constraints, which is the topic of our ongoing research.

\bibliographystyle{abbrvnat}
\bibliography{latex_nips2020_531_update}

\begin{thebibliography}{52}
\providecommand{\natexlab}[1]{#1}
\providecommand{\url}[1]{\texttt{#1}}
\expandafter\ifx\csname urlstyle\endcsname\relax
  \providecommand{\doi}[1]{doi: #1}\else
  \providecommand{\doi}{doi: \begingroup \urlstyle{rm}\Url}\fi

\bibitem[Abbassi et~al.(2013)Abbassi, Mirrokni, and
  Thakur]{abbassi2013diversity}
Z.~Abbassi, V.~S. Mirrokni, and M.~Thakur.
\newblock Diversity maximization under matroid constraints.
\newblock In \emph{KDD}, pages 32--40, 2013.

\bibitem[Badanidiyuru and Vondr{\'a}k(2014)]{badanidiyuru2014fast}
A.~Badanidiyuru and J.~Vondr{\'a}k.
\newblock Fast algorithms for maximizing submodular functions.
\newblock In \emph{SODA}, pages 1497--1514, 2014.

\bibitem[Balcan and Harvey(2018)]{balcan2018submodular}
M.-F. Balcan and N.~J. Harvey.
\newblock Submodular functions: Learnability, structure, and optimization.
\newblock \emph{SIAM Journal on Computing}, 47\penalty0 (3):\penalty0 703--754,
  2018.

\bibitem[Balkanski et~al.(2018)Balkanski, Breuer, and Singer]{balkanski2018non}
E.~Balkanski, A.~Breuer, and Y.~Singer.
\newblock Non-monotone submodular maximization in exponentially fewer
  iterations.
\newblock In \emph{NIPS}, pages 2353--2364, 2018.

\bibitem[Balkanski et~al.(2019)Balkanski, Rubinstein, and
  Singer]{balkanski2019optimal}
E.~Balkanski, A.~Rubinstein, and Y.~Singer.
\newblock An optimal approximation for submodular maximization under a matroid
  constraint in the adaptive complexity model.
\newblock In \emph{STOC}, pages 66--77, 2019.

\bibitem[Barbieri et~al.(2012)Barbieri, Bonchi, and Manco]{barbieri2013topic}
N.~Barbieri, F.~Bonchi, and G.~Manco.
\newblock Topic-aware social influence propagation models.
\newblock In \emph{ICDM}, pages 81--90, 2012.

\bibitem[Borgs et~al.(2014)Borgs, Brautbar, Chayes, and
  Lucier]{borgs2014maximizing}
C.~Borgs, M.~Brautbar, J.~Chayes, and B.~Lucier.
\newblock Maximizing social influence in nearly optimal time.
\newblock In \emph{SODA}, pages 946--957, 2014.

\bibitem[Bryant and Perfect(1980)]{bryant1980independence}
V.~Bryant and H.~Perfect.
\newblock \emph{Independence Theory in Combinatorics: An Introductory Account
  with Applications to Graphs and Transversals}.
\newblock Springer, 1980.

\bibitem[Buchbinder and Feldman(2018)]{buchbinder2018deterministic}
N.~Buchbinder and M.~Feldman.
\newblock Deterministic algorithms for submodular maximization problems.
\newblock \emph{ACM Transactions on Algorithms}, 14\penalty0 (3):\penalty0
  1--20, 2018.

\bibitem[Buchbinder and Feldman(2019)]{buchbinder2019constrained}
N.~Buchbinder and M.~Feldman.
\newblock Constrained submodular maximization via a nonsymmetric technique.
\newblock \emph{Mathematics of Operations Research}, 44\penalty0 (3):\penalty0
  988--1005, 2019.

\bibitem[Buchbinder et~al.(2014)Buchbinder, Feldman, Naor, and
  Schwartz]{buchbinder2014submodular}
N.~Buchbinder, M.~Feldman, J.~Naor, and R.~Schwartz.
\newblock Submodular maximization with cardinality constraints.
\newblock In \emph{SODA}, pages 1433--1452, 2014.

\bibitem[Buchbinder et~al.(2015)Buchbinder, Feldman, Seffi, and
  Schwartz]{buchbinder2015tight}
N.~Buchbinder, M.~Feldman, J.~Seffi, and R.~Schwartz.
\newblock A tight linear time (1/2)-approximation for unconstrained submodular
  maximization.
\newblock \emph{SIAM Journal on Computing}, 44\penalty0 (5):\penalty0
  1384--1402, 2015.

\bibitem[Buchbinder et~al.(2017)Buchbinder, Feldman, and
  Schwartz]{buchbinder2017comparing}
N.~Buchbinder, M.~Feldman, and R.~Schwartz.
\newblock Comparing apples and oranges: Query trade-off in submodular
  maximization.
\newblock \emph{Mathematics of Operations Research}, 42\penalty0 (2):\penalty0
  308--329, 2017.

\bibitem[Buchbinder et~al.(2019)Buchbinder, Feldman, and
  Garg]{buchbinder2019deterministic}
N.~Buchbinder, M.~Feldman, and M.~Garg.
\newblock Deterministic ($1/2$+ $\varepsilon$)-approximation for submodular
  maximization over a matroid.
\newblock In \emph{SODA}, pages 241--254, 2019.

\bibitem[Calinescu et~al.(2011)Calinescu, Chekuri, Pal, and
  Vondr{\'a}k]{calinescu2011maximizing}
G.~Calinescu, C.~Chekuri, M.~Pal, and J.~Vondr{\'a}k.
\newblock Maximizing a monotone submodular function subject to a matroid
  constraint.
\newblock \emph{SIAM Journal on Computing}, 40\penalty0 (6):\penalty0
  1740--1766, 2011.

\bibitem[Chekuri and Quanrud(2019)]{chekuri2019parallelizing}
C.~Chekuri and K.~Quanrud.
\newblock Parallelizing greedy for submodular set function maximization in
  matroids and beyond.
\newblock In \emph{STOC}, pages 78--89, 2019.

\bibitem[Chen et~al.(2009)Chen, Wang, and Yang]{chen2009efficient}
W.~Chen, Y.~Wang, and S.~Yang.
\newblock Efficient influence maximization in social networks.
\newblock In \emph{KDD}, pages 199--208, 2009.

\bibitem[Chen et~al.(2020)Chen, Zhang, and Zhao]{chen2020gradient}
W.~Chen, W.~Zhang, and H.~Zhao.
\newblock Gradient method for continuous influence maximization with
  budget-saving considerations.
\newblock In \emph{AAAI}, 2020.

\bibitem[Corah and Michael(2018)]{corah2018distributed}
M.~Corah and N.~Michael.
\newblock Distributed submodular maximization on partition matroids for
  planning on large sensor networks.
\newblock In \emph{CDC}, pages 6792--6799, 2018.

\bibitem[El-Arini and Guestrin(2011)]{el2011beyond}
K.~El-Arini and C.~Guestrin.
\newblock Beyond keyword search: Discovering relevant scientific literature.
\newblock In \emph{KDD}, pages 439--447, 2011.

\bibitem[El-Arini et~al.(2009)El-Arini, Veda, Shahaf, and
  Guestrin]{el2009turning}
K.~El-Arini, G.~Veda, D.~Shahaf, and C.~Guestrin.
\newblock Turning down the noise in the blogosphere.
\newblock In \emph{KDD}, pages 289--298, 2009.

\bibitem[Ene and Nguyen(2019)]{ene2019towards}
A.~Ene and H.~L. Nguyen.
\newblock Towards nearly-linear time algorithms for submodular maximization
  with a matroid constraint.
\newblock In \emph{ICALP}, page 54:1–54:14, 2019.

\bibitem[Fahrbach et~al.(2019)Fahrbach, Mirrokni, and
  Zadimoghaddam]{fahrbach2019non}
M.~Fahrbach, V.~Mirrokni, and M.~Zadimoghaddam.
\newblock Non-monotone submodular maximization with nearly optimal adaptivity
  and query complexity.
\newblock In \emph{ICML}, pages 1833--1842, 2019.

\bibitem[Feldman et~al.(2011)Feldman, Naor, and Schwartz]{feldman2011unified}
M.~Feldman, J.~Naor, and R.~Schwartz.
\newblock A unified continuous greedy algorithm for submodular maximization.
\newblock In \emph{FOCS}, pages 570--579, 2011.

\bibitem[Feldman et~al.(2017)Feldman, Harshaw, and Karbasi]{feldman2017greed}
M.~Feldman, C.~Harshaw, and A.~Karbasi.
\newblock Greed is good: Near-optimal submodular maximization via greedy
  optimization.
\newblock In \emph{COLT}, pages 758--784, 2017.

\bibitem[Feldman et~al.(2018)Feldman, Karbasi, and Kazemi]{feldman2018less}
M.~Feldman, A.~Karbasi, and E.~Kazemi.
\newblock Do less, get more: Streaming submodular maximization with
  subsampling.
\newblock In \emph{NIPS}, pages 732--742, 2018.

\bibitem[Fisher et~al.(1978)Fisher, Nemhauser, and Wolsey]{fisher1978analysis}
M.~Fisher, G.~Nemhauser, and L.~Wolsey.
\newblock An analysis of approximations for maximizing submodular set
  functions—ii.
\newblock \emph{Mathematical Programming Study}, 8:\penalty0 73--87, 1978.

\bibitem[Fujii and Sakaue(2019)]{fujii2019beyond}
K.~Fujii and S.~Sakaue.
\newblock Beyond adaptive submodularity: Approximation guarantees of greedy
  policy with adaptive submodularity ratio.
\newblock In \emph{ICML}, pages 2042--2051, 2019.

\bibitem[Gharan and Vondr{\'a}k(2011)]{gharan2011submodular}
S.~O. Gharan and J.~Vondr{\'a}k.
\newblock Submodular maximization by simulated annealing.
\newblock In \emph{SODA}, pages 1098--1116, 2011.

\bibitem[Gomes and Krause(2010)]{gomes2010budgeted}
R.~Gomes and A.~Krause.
\newblock Budgeted nonparametric learning from data streams.
\newblock In \emph{ICML}, page 391–398, 2010.

\bibitem[Gupta et~al.(2010)Gupta, Roth, Schoenebeck, and
  Talwar]{gupta2010constrained}
A.~Gupta, A.~Roth, G.~Schoenebeck, and K.~Talwar.
\newblock Constrained non-monotone submodular maximization: Offline and
  secretary algorithms.
\newblock In \emph{WINE}, pages 246--257, 2010.

\bibitem[Haba et~al.(2020)Haba, Kazemi, Feldman, and
  Karbasi]{haba2020streaming}
R.~Haba, E.~Kazemi, M.~Feldman, and A.~Karbasi.
\newblock Streaming submodular maximization under a $ k $-set system
  constraint.
\newblock In \emph{ICML, (arXiv:2002.03352)}, 2020.

\bibitem[Iyer and Bilmes(2013)]{iyer2013submodular}
R.~K. Iyer and J.~A. Bilmes.
\newblock Submodular optimization with submodular cover and submodular knapsack
  constraints.
\newblock In \emph{NIPS}, pages 2436--2444, 2013.

\bibitem[Kempe et~al.(2003)Kempe, Kleinberg, and Tardos]{kempe2003maximizing}
D.~Kempe, J.~Kleinberg, and {\'E}.~Tardos.
\newblock Maximizing the spread of influence through a social network.
\newblock In \emph{KDD}, pages 137--146, 2003.

\bibitem[Krause and Guestrin(2011)]{krause2011submodularity}
A.~Krause and C.~Guestrin.
\newblock Submodularity and its applications in optimized information
  gathering.
\newblock \emph{ACM Transactions on Intelligent Systems and Technology},
  2\penalty0 (4):\penalty0 1--20, 2011.

\bibitem[Kuhnle(2019)]{kuhnle2019interlaced}
A.~Kuhnle.
\newblock Interlaced greedy algorithm for maximization of submodular functions
  in nearly linear time.
\newblock In \emph{NIPS}, pages 2371--2381, 2019.

\bibitem[Kuhnle et~al.(2018)Kuhnle, Smith, Crawford, and Thai]{kuhnle2018fast}
A.~Kuhnle, J.~D. Smith, V.~Crawford, and M.~Thai.
\newblock Fast maximization of non-submodular, monotonic functions on the
  integer lattice.
\newblock In \emph{ICML}, pages 2786--2795, 2018.

\bibitem[Lee et~al.(2010)Lee, Mirrokni, Nagarajan, and
  Sviridenko]{lee2010maximizing}
J.~Lee, V.~S. Mirrokni, V.~Nagarajan, and M.~Sviridenko.
\newblock Maximizing nonmonotone submodular functions under matroid or knapsack
  constraints.
\newblock \emph{SIAM Journal on Discrete Mathematics}, 23\penalty0
  (4):\penalty0 2053--2078, 2010.

\bibitem[Leskovec and Krevl(2014)]{leskovec2014snap}
J.~Leskovec and A.~Krevl.
\newblock Snap datasets: Stanford large network dataset collection.
\newblock \emph{URL: http://snap.stanford.edu/}, 2014.

\bibitem[Leskovec et~al.(2007)Leskovec, Krause, Guestrin, Faloutsos,
  VanBriesen, and Glance]{leskovec2007cost}
J.~Leskovec, A.~Krause, C.~Guestrin, C.~Faloutsos, J.~VanBriesen, and
  N.~Glance.
\newblock Cost-effective outbreak detection in networks.
\newblock In \emph{KDD}, pages 420--429, 2007.

\bibitem[Lin and Bilmes(2011)]{lin2011class}
H.~Lin and J.~Bilmes.
\newblock A class of submodular functions for document summarization.
\newblock In \emph{ACL/HLT}, pages 510--520, 2011.

\bibitem[Liu et~al.(2014)Liu, Tuzel, Ramalingam, and Chellappa]{liu2013entropy}
M.-Y. Liu, O.~Tuzel, S.~Ramalingam, and R.~Chellappa.
\newblock Entropy-rate clustering: Cluster analysis via maximizing a submodular
  function subject to a matroid constraint.
\newblock \emph{IEEE Transactions on Pattern Analysis and Machine
  Intelligence}, 36\penalty0 (1):\penalty0 99--112, 2014.

\bibitem[Mirzasoleiman et~al.(2015)Mirzasoleiman, Badanidiyuru, Karbasi,
  Vondr{\'a}k, and Krause]{mirzasoleiman2015lazier}
B.~Mirzasoleiman, A.~Badanidiyuru, A.~Karbasi, J.~Vondr{\'a}k, and A.~Krause.
\newblock Lazier than lazy greedy.
\newblock In \emph{AAAI}, pages 1812--1818, 2015.

\bibitem[Mirzasoleiman et~al.(2016)Mirzasoleiman, Badanidiyuru, and
  Karbasi]{mirzasoleiman2016fast}
B.~Mirzasoleiman, A.~Badanidiyuru, and A.~Karbasi.
\newblock Fast constrained submodular maximization: Personalized data
  summarization.
\newblock In \emph{ICML}, pages 1358--1367, 2016.

\bibitem[Mirzasoleiman et~al.(2018)Mirzasoleiman, Jegelka, and
  Krause]{mirzasoleiman2018streaming}
B.~Mirzasoleiman, S.~Jegelka, and A.~Krause.
\newblock Streaming non-monotone submodular maximization: Personalized video
  summarization on the fly.
\newblock In \emph{AAAI}, pages 1379--1386, 2018.

\bibitem[Singla et~al.(2014)Singla, Bogunovic, Bartok, Karbasi, and
  Krause]{singla2014near}
A.~Singla, I.~Bogunovic, G.~Bartok, A.~Karbasi, and A.~Krause.
\newblock Near-optimally teaching the crowd to classify.
\newblock In \emph{ICML}, pages 154--162, 2014.

\bibitem[Singla et~al.(2016)Singla, Tschiatschek, and Krause]{singla2016noisy}
A.~Singla, S.~Tschiatschek, and A.~Krause.
\newblock Noisy submodular maximization via adaptive sampling with applications
  to crowdsourced image collection summarization.
\newblock In \emph{AAAI}, page 2037–2041, 2016.

\bibitem[Soma and Yoshida(2017)]{soma2017non}
T.~Soma and Y.~Yoshida.
\newblock Non-monotone dr-submodular function maximization.
\newblock In \emph{AAAI}, pages 898--904, 2017.

\bibitem[Soma and Yoshida(2018)]{soma2018maximizing}
T.~Soma and Y.~Yoshida.
\newblock Maximizing monotone submodular functions over the integer lattice.
\newblock \emph{Mathematical Programming}, 172\penalty0 (1-2):\penalty0
  539--563, 2018.

\bibitem[Vondr{\'a}k(2013)]{vondrak2013symmetry}
J.~Vondr{\'a}k.
\newblock Symmetry and approximability of submodular maximization problems.
\newblock \emph{SIAM Journal on Computing}, 42\penalty0 (1):\penalty0 265--304,
  2013.

\bibitem[Williams et~al.(2017)Williams, Gasparri, and
  Ulivi]{williams2017decentralized}
R.~K. Williams, A.~Gasparri, and G.~Ulivi.
\newblock Decentralized matroid optimization for topology constraints in
  multi-robot allocation problems.
\newblock In \emph{ICRA}, pages 293--300, 2017.

\bibitem[Xu et~al.(2015)Xu, Mukherjee, Li, Warner, Rehg, and Singh]{xu2015gaze}
J.~Xu, L.~Mukherjee, Y.~Li, J.~Warner, J.~M. Rehg, and V.~Singh.
\newblock Gaze-enabled egocentric video summarization via constrained
  submodular maximization.
\newblock In \emph{CVPR}, pages 2235--2244, 2015.

\end{thebibliography}

\end{titlepage}
\clearpage

%
%



\section*{Appendix A: Missing Proofs}

\setcounter{section}{0}
\renewcommand\thesection{A.\arabic{section}}

\section{Proof of Lemma~\ref{lma:definitionofpi}}

%
%

\begin{proofit}
We only prove the existence of $\pi_1(\cdot)$, as the existence of $\pi_2(\cdot)$ can be proved in the same way. Suppose that the elements in $S_1$ are $\{u_1,\cdots, u_s\}$ (listed according to the order that they are added into $S_1$). We use an argument inspired by \citep{calinescu2011maximizing} to construct $\pi_1(\cdot)$. Let $L_s=O_1^+\cup O_1^-\cup O_2^-\cup O_3$. We execute the following iterations from $j=s$ to $j=0$. At the beginning of the $j$-th iteration, we compute a set $A_j=\{x\in L_j\backslash \{u_1,\cdots,u_{j-1}\}: \{u_1,\cdots,u_{j-1},x\}\in \I\}$. If $u_j\in O_1^+\cup O_1^-$ (so $u_j\in A_j$), then we set $\pi_1(u_j)=u_j$ and $D_j=\{u_j\}$. If $u_j\notin O_1^+\cup O_1^-$ and $A_j\neq \emptyset$, then we pick an arbitrary $e\in A_j$ and set $\pi_1(e)=u_j$; $D_j=\{e\}$. If $A_j= \emptyset$, then we simply set $D_j=\emptyset$. After that, we set $L_{j-1}=L_j\backslash D_j$ and enter the $(j-1)$-th iteration.

From the above process, it can be easily seen that $\pi_1(\cdot)$ has the properties required by the lemma as long as it is a valid function. So we only need to prove that each $e\in O_1^+\cup O_1^-\cup O_2^-\cup O_3$ is mapped to an element in $S_1$, which is equivalent to prove $L_0=\emptyset$ as each $e\in L_s\backslash L_0$ is mapped to an element in $S_1$ according to the above process. In the following, we prove $L_0=\emptyset$ by induction, i.e., proving $|L_j|\leq j$ for all $0\leq j\leq s$.

We first prove $|L_s|\leq s$. By way of contradiction, let us assume $|L_s|=|O_1^+\cup O_1^-\cup O_2^-\cup O_3|>s=|S_1|$. Then, there must exist some $x\in O_2^-\cup O_3$ satisfying $S_1\cup \{x\}\in \I$ according to the exchange property of matroids. Moreover, according to the definition of $O_3$, we also have $x\notin O_3$, which implies $x\in O_2^-$. So we can get $\pre(x,S_1)\cup \{x\}\in \I$ due to $\pre(x,S_1)\subseteq S_1$, $S_1\cup \{x\}\in \I$ and the hereditary property of independence systems, but this contradicts the definition of $O_2^-$. Therefore,  $|L_j|\leq j$ holds when $j=s$.

Now suppose $|L_{j}|\leq j$ for certain $j\leq s$. If $A_j\neq \emptyset$, then we have $D_j\neq \emptyset$ and hence $|L_{j-1}|=|L_j|-1\leq j-1$. If $A_j=\emptyset$, then we know that there does not exist $x\in L_j\backslash \{u_1,\cdots,u_{j-1}\}$ such that $\{u_1,\cdots,u_{j-1}\}\cup \{x\}\in \I$. This implies $|\{u_1,\cdots,u_{j-1}\}|\geq |L_{j}|$ due to the exchange property of matroids. So we also have $|L_{j-1}|=|L_{j}|\leq j-1$, which completes the proof.
\end{proofit}

\section{Proof of Lemma~\ref{lma:boudingmarginalgain}}

For clarity, we decompose Lemma~\ref{lma:boudingmarginalgain} into three lemmas (Lemmas~\ref{lma:o1s2twingreedy}-\ref{lma:o5s1twingreedy}) and prove each of them.

\begin{lemma}
The TwinGreedy algorithm satisfies
\begin{eqnarray}
f(O_1^+\mid S_2)\leq \sum_{e\in O_1^+}\delta(\pi_1(e));~~f(O_2^+\mid S_1)\leq \sum_{e\in O_2^+}\delta(\pi_2(e))
\end{eqnarray}
\label{lma:o1s2twingreedy}
\end{lemma}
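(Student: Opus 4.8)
The plan is to prove the first inequality $f(O_1^+\mid S_2)\leq \sum_{e\in O_1^+}\delta(\pi_1(e))$ directly from the greedy selection rule of TwinGreedy, and then obtain the second by an entirely symmetric argument. The central observation is that every $e\in O_1^+$ is itself an element of $S_1$, so by the second property of Lemma~\ref{lma:definitionofpi} we have $\pi_1(e)=e$, and hence $\delta(\pi_1(e))=\delta(e)=f(e\mid \pre(e,S_1))$. Thus it suffices to show $f(e\mid S_2)\leq f(e\mid \pre(e,S_1))$ for each $e\in O_1^+$ and then sum over $O_1^+$.

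First I would fix the iteration at which TwinGreedy adds $e$ into $S_1$. At the start of that iteration the current contents of $S_1$ and $S_2$ are exactly $\pre(e,S_1)$ and $\pre(e,S_2)$, respectively, by the definition of $\pre(\cdot,\cdot)$. Since $e\in O_1^+$, the definition of $O_1^+$ gives $\pre(e,S_2)\cup\{e\}\in\I$, so at this iteration $e$ belongs to $\M_2$ and adding $e$ to $S_2$ is a feasible move with gain $f(e\mid \pre(e,S_2))$. Because the algorithm instead selected the pair $(S_1,e)$ as the $\arg\max$ over all feasible pairs, the greedy rule forces $f(e\mid \pre(e,S_1))\geq f(e\mid \pre(e,S_2))$, i.e. $\delta(e)\geq f(e\mid \pre(e,S_2))$.

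Next I would invoke submodularity. Since $\pre(e,S_2)\subseteq S_2$, the diminishing-returns property stated after Eqn.~\eqref{eqn:fundamentaleqn} yields $f(e\mid S_2)\leq f(e\mid \pre(e,S_2))\leq \delta(e)=\delta(\pi_1(e))$. Finally, applying Eqn.~\eqref{eqn:fundamentaleqn} with the base set $S_2$ and the singletons of $O_1^+$ as the partition of $(S_2\cup O_1^+)\backslash S_2$ gives $f(O_1^+\mid S_2)\leq \sum_{e\in O_1^+}f(e\mid S_2)\leq \sum_{e\in O_1^+}\delta(\pi_1(e))$, which is the first claim. The second inequality follows verbatim after interchanging the roles of $S_1$ and $S_2$ and replacing $O_1^+,\pi_1$ by $O_2^+,\pi_2$.

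I do not expect a genuine obstacle here; the only point requiring care is the bookkeeping that at the moment $e$ enters $S_1$ the set $S_2$ equals $\pre(e,S_2)$, so that the greedy comparison is between exactly the two quantities $f(e\mid \pre(e,S_1))$ and $f(e\mid \pre(e,S_2))$. Everything else is a direct application of the greedy maximization rule together with submodularity, and this same template will reappear (with the threshold-induced factor $1+\epsilon$) in the analogous argument for TwinGreedyFast.
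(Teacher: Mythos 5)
Your proposal is correct and follows essentially the same route as the paper's own proof: it fixes the iteration at which $e\in O_1^+$ enters $S_1$, uses the definition of $O_1^+$ to see that $e$ was simultaneously a feasible candidate for $S_2$, invokes the greedy rule to get $f(e\mid \pre(e,S_2))\leq \delta(e)$, and then applies submodularity together with $\pi_1(e)=e$ before summing over $O_1^+$. No gaps.
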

\begin{pfoflma}{\ref{lma:o1s2twingreedy}}
We only prove the first inequality, as the second one can be proved in the same way. For any $e\in O_1^+$, consider the moment that TwinGreedy inserts $e$ into $S_1$. At that moment, adding $e$ into $S_2$ also does not violate the feasibility of $\I$ according to the definition of $O_1^+$. Therefore, we must have $f(e\mid \pre(e,S_2))\leq \delta(e)$, because otherwise $e$ would not be inserted into $S_1$ according to the greedy rule of TwinGreedy. Using submodularity and the fact that $\pi_1(e)=e$, we get
\begin{eqnarray}
f(e\mid S_2)\leq f(e\mid \pre(e,S_2))\leq \delta(e)=\delta(\pi_1(e)),~~~\forall e\in O_1^+
\end{eqnarray}
and hence
\begin{eqnarray}
\sum_{e\in O_1^+}f(O_1^+\mid S_2)\leq \sum_{e\in O_1^+}f(e\mid S_2)\leq \sum_{e\in O_1^+}\delta(\pi_1(e)),
\end{eqnarray}
which completes the proof.
\end{pfoflma}

\begin{lemma}\label{lma:o2s2twingreedy}
The TwinGreedy algorithm satisfies
\begin{eqnarray}
f(O_1^-\mid S_2)\leq \sum_{e\in O_1^-}\delta(\pi_2(e));~~~~~~f(O_2^-\mid S_1)\leq \sum_{e\in O_2^-}\delta(\pi_1(e))
\end{eqnarray}
\end{lemma}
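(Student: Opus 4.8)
The plan is to prove the first inequality $f(O_1^-\mid S_2)\leq \sum_{e\in O_1^-}\delta(\pi_2(e))$; the second follows by interchanging the roles of $S_1$ and $S_2$ (and of $\pi_1$ and $\pi_2$). The target reduces to an elementwise bound: I will show that $f(e\mid S_2)\leq \delta(\pi_2(e))$ for every $e\in O_1^-$, after which summing over $O_1^-$ and applying Eqn.~\eqref{eqn:fundamentaleqn} to get $f(O_1^-\mid S_2)\leq \sum_{e\in O_1^-}f(e\mid S_2)$ finishes the argument. Note that, unlike the $O_1^+$ case where $\pi_1(e)=e$, here $\pi_2(e)$ need not equal $e$ (indeed $e\in S_1$ forces $e\notin S_2$), so the comparison must be made against the genuinely distinct element $\pi_2(e)\in S_2$.

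To establish the elementwise bound, fix $e\in O_1^-$ and look at the iteration in which TwinGreedy inserts $\pi_2(e)$ into $S_2$. At the start of that iteration the current contents of $S_2$ are exactly $\pre(\pi_2(e),S_2)$, and Lemma~\ref{lma:definitionofpi} guarantees $\pre(\pi_2(e),S_2)\cup\{e\}\in\I$, i.e.\ adding $e$ would keep $S_2$ feasible. Provided $e$ is still available (i.e.\ $e\notin S_1\cup S_2$ at this moment), $e$ is therefore a legitimate candidate of $\M_2$, so the greedy selection rule, which picks the pair maximizing the marginal gain, yields $f(e\mid \pre(\pi_2(e),S_2))\leq f(\pi_2(e)\mid \pre(\pi_2(e),S_2))=\delta(\pi_2(e))$. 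Since $\pre(\pi_2(e),S_2)\subseteq S_2$, submodularity (the property $f(Z\mid Y)\leq f(Z\mid X)$ for $X\subseteq Y$) gives $f(e\mid S_2)\leq f(e\mid \pre(\pi_2(e),S_2))$, and chaining these inequalities yields the claimed $f(e\mid S_2)\leq \delta(\pi_2(e))$.

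The crux — and the step I expect to be the only real obstacle — is verifying the availability of $e$ at that moment, namely that $e$ has not yet entered $S_1$ when $\pi_2(e)$ is added to $S_2$ (it can never enter $S_2$, as $e\in S_1$ and $S_1\cap S_2=\emptyset$). I would argue by contradiction: if $e$ were already in $S_1$, then since $e$ must have been inserted before $\pi_2(e)$ in the global insertion order, every element of $S_2$ present when $e$ was inserted is also present when $\pi_2(e)$ is inserted, so $\pre(e,S_2)\subseteq \pre(\pi_2(e),S_2)$. Combined with $\pre(\pi_2(e),S_2)\cup\{e\}\in\I$ and the hereditary property of the independence system, this forces $\pre(e,S_2)\cup\{e\}\in\I$, directly contradicting the definition $O_1^-=\{e\in O\cap S_1:\pre(e,S_2)\cup\{e\}\notin\I\}$. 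Hence $e$ is unselected at the relevant moment, the greedy comparison applies, and the proof goes through; the second inequality is entirely symmetric.
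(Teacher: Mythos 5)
Your proposal is correct and follows essentially the same route as the paper's own proof: the same element-wise comparison $f(e\mid S_2)\le f(e\mid \pre(\pi_2(e),S_2))\le \delta(\pi_2(e))$ via submodularity and the greedy rule, and the same contradiction argument (using $\pre(e,S_2)\subseteq\pre(\pi_2(e),S_2)$ and the hereditary property) to show $e$ is still an available candidate when $\pi_2(e)$ is inserted. No gaps.
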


\begin{pfoflma}{\ref{lma:o2s2twingreedy}}
We only prove the first inequality, as the second one can be proved in the same way. For any $e\in O_1^-$, consider the moment that TwinGreedy inserts $\pi_2(e)$ into $S_2$. At that moment, adding $e$ into $S_2$ also does not violate the feasibility of $\I$ as $\pre(\pi_2(e),S_2)\cup \{e\}\in \I$ according to Lemma~\ref{lma:definitionofpi}. This implies that $e$ has not been inserted into $S_1$ yet. 
To see this, let us assume (by way of contradiction) that $e$ has already been added into $S_1$ when TwinGreedy inserts $\pi_2(e)$ into $S_2$. So we have $\pre(e,S_2)\subseteq \pre(\pi_2(e),S_2)$. As $\pre(\pi_2(e),S_2)\cup \{e\}\in \I$, we must have $\pre(e,S_2)\cup \{e\}\in \I$ due to the hereditary property of independence systems. However, this contradicts $\pre(e,S_2)\cup \{e\}\notin \I$ as $e\in O_1^-$.

As $e$ has not been inserted into $S_1$ yet at the moment that $\pi_2(e)$ is inserted into $S_2$, and $\pre(\pi_2(e),S_2)\cup \{e\}\in \I$, we know that $e$ must be a contender to $\pi_2(e)$ when TwinGreedy inserts $\pi_2(e)$ into $S_2$. Due to the greedy selection rule of the algorithm, this means $\delta(\pi_2(e))=f(\pi_2(e)\mid \pre(\pi_2(e),S_2))\geq f(e\mid \pre(\pi_2(e),S_2))$. As $\pre(\pi_2(e),S_2)\subseteq S_2$, we also have $f(e\mid \pre(\pi_2(e),S_2))\geq f(e\mid S_2)$. Putting these together, we have
\begin{eqnarray}
f(O_1^-\mid S_2)\leq \sum_{e\in O_1^-}f(e\mid S_2)\leq \sum_{e\in O_1^-} f(e\mid \pre(\pi_2(e),S_2))\leq \sum_{e\in O_1^-}\delta(\pi_2(e))
\end{eqnarray}
which completes the proof.
\end{pfoflma}

\begin{lemma} \label{lma:o5s1twingreedy}
The TwinGreedy algorithm satisfies
\begin{eqnarray}
f(O_3\mid S_1)\leq \sum_{e\in O_3}\delta(\pi_1(e));~~f(O_4\mid S_2)\leq \sum_{e\in O_4}\delta(\pi_2(e))
\end{eqnarray}
\end{lemma}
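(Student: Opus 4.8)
The plan is to mirror the arguments already used for Lemmas~\ref{lma:o1s2twingreedy} and~\ref{lma:o2s2twingreedy}, and to prove only the bound on $f(O_3\mid S_1)$, since the bound on $f(O_4\mid S_2)$ follows by the symmetric argument with the roles of $S_1,S_2$ and of $\pi_1,\pi_2$ interchanged. First I would fix an arbitrary $e\in O_3$ and focus on the single iteration in which TwinGreedy inserts $\pi_1(e)$ into $S_1$. Just before that insertion, the current contents of $S_1$ are exactly $\pre(\pi_1(e),S_1)$, and the first property of Lemma~\ref{lma:definitionofpi} supplies $\pre(\pi_1(e),S_1)\cup\{e\}\in\I$, which is precisely the feasibility fact that $\pi_1$ was engineered to provide for elements of $O_3$.

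The crucial step is to certify that $e$ is a genuine contender to $\pi_1(e)$ at that iteration, i.e.\ that $e\in\M_1$ when $\pi_1(e)$ is selected. This needs two things: that $S_1\cup\{e\}\in\I$ at that moment, and that $e\notin S_1\cup S_2$ at that moment. The first is exactly the feasibility fact just noted. The second holds trivially, because $e\in O_3\subseteq O\backslash(S_1\cup S_2)$ is never selected into either solution set during the entire run, so it permanently remains in $\N\backslash(S_1\cup S_2)$; this is even easier than the $O_1^-$ case of Lemma~\ref{lma:o2s2twingreedy}, where one had to rule out that $e$ had already entered $S_1$. Since $\pi_1(e)$ is the pair chosen by the greedy rule $\arg\max_{j\in C,u\in\M_j} f(u\mid S_j)$ and $e\in\M_1$, I would then conclude
\begin{eqnarray}
\delta(\pi_1(e))=f(\pi_1(e)\mid \pre(\pi_1(e),S_1))\geq f(e\mid \pre(\pi_1(e),S_1))\geq f(e\mid S_1), \nonumber
\end{eqnarray}
where the last inequality is submodularity together with $\pre(\pi_1(e),S_1)\subseteq S_1$.

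Summing over all $e\in O_3$ and applying the subadditivity bound of Eqn.~\eqref{eqn:fundamentaleqn} then yields
\begin{eqnarray}
f(O_3\mid S_1)\leq \sum\nolimits_{e\in O_3} f(e\mid S_1)\leq \sum\nolimits_{e\in O_3}\delta(\pi_1(e)), \nonumber
\end{eqnarray}
which is the claimed inequality, and the $O_4$ bound follows verbatim after swapping the indices. The only delicate point in the whole argument is the contender certification of the second paragraph; once $e\in\M_1$ is established at the moment $\pi_1(e)$ is chosen, the greedy maximality rule and submodularity finish the proof mechanically. I do not expect a real obstacle here, precisely because Lemma~\ref{lma:definitionofpi} was constructed to hand us $\pre(\pi_1(e),S_1)\cup\{e\}\in\I$ for every $e\in O_3$.
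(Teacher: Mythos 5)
Your proof is correct and follows essentially the same route as the paper's: invoke Lemma~\ref{lma:definitionofpi} to get $\pre(\pi_1(e),S_1)\cup\{e\}\in\I$, use the greedy selection rule to conclude $\delta(\pi_1(e))\geq f(e\mid\pre(\pi_1(e),S_1))$, and finish with submodularity and Eqn.~\eqref{eqn:fundamentaleqn}. Your explicit verification that $e\in O_3$ never enters $S_1\cup S_2$ (so it is a genuine contender) is a point the paper leaves implicit, but the argument is the same.
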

\begin{pfoflma}{\ref{lma:o5s1twingreedy}}
We only prove the first inequality, as the second one can be proved in the same way. Consider any $e\in O_3$. According to Lemma~\ref{lma:definitionofpi},  we have $\pre(\pi_1(e),S_1)\cup \{e\}\in \I$, which means that $e$ can be added into $S_1$ without violating the feasibility of $\I$ when $\pi_1(e)$ is added into $S_1$. According to the greedy rule of TwinGreedy and submodularity, we must have $\delta(\pi_1(e))=f(\pi_1(e)\mid \pre(\pi_1(e),S_1))\geq f(e\mid \pre(\pi_1(e),S_1))$, because otherwise $e$ should be added into $S_1$, which contradicts $e\in O_3$. Therefore, we get
\begin{eqnarray}
f(O_3\mid S_1)\leq \sum_{e\in O_3}f(e\mid S_1)\leq \sum_{e\in O_3}f(e\mid \pre(\pi_1(e),S_1)) \leq \sum_{e\in O_3}\delta(\pi_1(e))
\end{eqnarray}
which completes the proof.
\end{pfoflma}

\section{Proof of Theorem~\ref{lma:boundoftwingreedy}} \label{sec:proofoftheorem1}

\begin{proofit}
We only consider the special case that $S_1$ or $S_2$ is empty, as the main proof of the theorem has been presented in the paper. Without loss of generality, we assume that $S_2$ is empty. According to the greedy rule of the algorithm, we have
\begin{eqnarray}
f(O\cap S_1\mid \emptyset) \leq \sum_{e\in O\cap S_1}f(e\mid \emptyset)\leq \sum_{e\in O\cap S_1}\delta(e)\leq \sum_{e\in S_1}\delta(e)=f(S_1\mid \emptyset)
\end{eqnarray}
and $f(O\backslash S_1\mid \emptyset)\leq \sum_{e\in O\backslash S_1}f(e\mid \emptyset)\leq 0$. Combining these with
\begin{eqnarray}
f(O\backslash S_1)+f(O\cap S_1)\geq f(O)+f(\emptyset),
\end{eqnarray}
we get $f(S_1)\geq f(O)$, which proves that $S_1$ is an optimal solution when $S_2$ is empty.
%
\end{proofit}

\section{Proof of Lemma~\ref{lma:marginalgainboundfast}}


 For clarity, we decompose Lemma~\ref{lma:marginalgainboundfast} into three lemmas (Lemmas~\ref{lma:o1s2fast}-\ref{lma:o5s1fast}) and prove each of them.

\begin{lemma} \label{lma:o1s2fast}
For the TwinGreedyFast algorithm, we have
\begin{eqnarray}
f(O_1^+\mid S_2)\leq \sum_{e\in O_1^+}\delta(\pi_1(e));~~f(O_2^+\mid S_1)\leq \sum_{e\in O_2^+}\delta(\pi_2(e))
\end{eqnarray}
\end{lemma}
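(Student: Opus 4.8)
The plan is to prove the two inequalities pointwise and then aggregate them using Eqn.~\eqref{eqn:fundamentaleqn}; I will only treat the first bound $f(O_1^+\mid S_2)\leq \sum_{e\in O_1^+}\delta(\pi_1(e))$, since the second follows by swapping the roles of $S_1$ and $S_2$. Since $O_1^+\subseteq S_1$ and $S_1\cap S_2=\emptyset$, the set $O_1^+$ is disjoint from $S_2$, so partitioning $O_1^+$ into singletons and applying Eqn.~\eqref{eqn:fundamentaleqn} with $X=S_2$ gives $f(O_1^+\mid S_2)\leq \sum_{e\in O_1^+} f(e\mid S_2)$. It therefore suffices to establish the per-element inequality $f(e\mid S_2)\leq \delta(\pi_1(e))$ for every $e\in O_1^+$. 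Recalling that Lemma~\ref{lma:definitionofpi} gives $\pi_1(e)=e$ for each $e\in O_1^+$, and that $\delta(e)=f(e\mid \pre(e,S_1))$ because $e\in S_1$, the target reduces to $f(e\mid S_2)\leq f(e\mid \pre(e,S_1))$.

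The key step is to analyze the moment at which TwinGreedyFast inserts $e$ into $S_1$. At that instant the current contents of $S_1$ and $S_2$ are exactly $\pre(e,S_1)$ and $\pre(e,S_2)$. By the definition of $O_1^+$ we have $\pre(e,S_2)\cup \{e\}\in \I$, so at that moment adding $e$ into $S_2$ is also feasible; hence both signals $\Delta_1=f(e\mid \pre(e,S_1))$ and $\Delta_2=f(e\mid \pre(e,S_2))$ are finite in the algorithm. Because $e$ is actually placed into $S_1$, the $\arg\max$ selection rule forces $\Delta_1\geq \Delta_2$, i.e. $\delta(e)=f(e\mid \pre(e,S_1))\geq f(e\mid \pre(e,S_2))$. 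Finally, submodularity together with $\pre(e,S_2)\subseteq S_2$ yields $f(e\mid S_2)\leq f(e\mid \pre(e,S_2))\leq \delta(e)=\delta(\pi_1(e))$, which is exactly the per-element bound; summing over $O_1^+$ then completes the argument.

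I expect no serious obstacle here, as this is the most transparent of the three cases feeding Lemma~\ref{lma:marginalgainboundfast}: the element $e$ lives in $S_1$ itself, so $\pi_1(e)=e$ and the relevant marginal gains are evaluated against the honest prefixes $\pre(e,S_1)$ and $\pre(e,S_2)$, and no threshold-loss factor $(1+\epsilon)$ appears. The only point requiring care is confirming that both candidate insertions are simultaneously feasible at the decision step, so that the greedy comparison $\Delta_1\geq \Delta_2$ is actually enforced by the algorithm rather than being vacuous; this is precisely what the defining condition $\pre(e,S_2)\cup\{e\}\in \I$ of $O_1^+$ supplies. In contrast, the harder $O_1^-$, $O_2^-$, $O_3$, $O_4$ cases will force a comparison against a later insertion $\pi_i(e)\neq e$ and will pick up the $(1+\epsilon)$ threshold factor.
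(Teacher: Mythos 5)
Your proof is correct and follows essentially the same route as the paper's: both arguments hinge on the observation that at the instant $e\in O_1^+$ enters $S_1$, insertion into $S_2$ is also feasible, so the greedy selection forces $f(e\mid \pre(e,S_2))\leq \delta(e)$, after which submodularity and $\pi_1(e)=e$ finish the per-element bound. The paper phrases the comparison as a threshold contradiction ($f(e\mid\pre(e,S_2))>\delta(e)\geq\tau$ would have sent $e$ into $S_2$) while you invoke the $\arg\max$ rule directly, but these are the same argument.
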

\begin{pfoflma}{\ref{lma:o1s2fast}}
The proof is similar to that of Lemma~\ref{lma:o1s2twingreedy}, and we present the full proof for completeness. We will only prove the first inequality, as the second one can be proved in the same way.  For any $e\in O_1^+$, consider the moment that TwinGreedy inserts $e$ into $S_1$ and suppose that the current threshold is $\tau$. Therefore, we must have $\delta(e)\geq \tau$. At that moment, adding $e$ into $S_2$ also does not violate the feasibility of $\I$ according to the definition of $O_1^+$. So we must have $f(e\mid \pre(e,S_2))\leq \delta(e)$, because otherwise we have $f(e\mid \pre(e,S_2))>\delta(e)\geq \tau$, and hence $e$ would be inserted into $S_2$ according to the greedy rule of TwinGreedyFast. Using submodularity and the fact that $\pi_1(e)=e$, we get
\begin{eqnarray}
f(e\mid S_2)\leq f(e\mid \pre(e,S_2))\leq \delta(e)=\delta(\pi_1(e)),~~~\forall e\in O_1^+
\end{eqnarray}
and hence
\begin{eqnarray}
\sum_{e\in O_1^+}f(O_1^+\mid S_2)\leq \sum_{e\in O_1^+}f(e\mid S_2)\leq \sum_{e\in O_1^+}\delta(\pi_1(e)),
\end{eqnarray}
which completes the proof.
\end{pfoflma}

\begin{lemma}\label{lma:o2s2fast}
For the TwinGreedyFast Algorithm, we have
\begin{eqnarray}
f(O_1^-\mid S_2)\leq (1+\epsilon)\sum_{e\in O_1^-}\delta(\pi_2(e));~~f(O_2^-\mid S_1)\leq (1+\epsilon)\sum_{e\in O_2^-}\delta(\pi_1(e))
\end{eqnarray}
\end{lemma}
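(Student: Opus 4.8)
The plan is to mirror the proof of Lemma~\ref{lma:o2s2twingreedy}, but to track the thresholds used by TwinGreedyFast, so that the exact greedy optimality exploited in the TwinGreedy analysis is replaced by \emph{approximate} optimality up to a factor $(1+\epsilon)$. By the symmetry between the twin sets $S_1$ and $S_2$, I would only prove the first inequality $f(O_1^-\mid S_2)\le (1+\epsilon)\sum_{e\in O_1^-}\delta(\pi_2(e))$; the second follows by swapping the roles of $S_1,S_2$ and of $\pi_1,\pi_2$.

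First I would fix $e\in O_1^-$ and consider the moment TwinGreedyFast adds $\pi_2(e)$ into $S_2$, say under threshold $\tau$. As in Lemma~\ref{lma:o2s2twingreedy}, I would re-establish that $e$ has not yet entered $S_1\cup S_2$ at this moment: Lemma~\ref{lma:definitionofpi} gives $\pre(\pi_2(e),S_2)\cup\{e\}\in\I$, and if $e$ had already been inserted into $S_1$ then $\pre(e,S_2)\subseteq\pre(\pi_2(e),S_2)$, so the hereditary property would force $\pre(e,S_2)\cup\{e\}\in\I$, contradicting $e\in O_1^-$. Since $\pi_2(e)$ was the element actually accepted under threshold $\tau$, its gain satisfies $\delta(\pi_2(e))=f(\pi_2(e)\mid\pre(\pi_2(e),S_2))\ge\tau$.

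The crux is to upper bound $f(e\mid\pre(\pi_2(e),S_2))$ by $(1+\epsilon)\tau$. I would argue by contradiction: if $f(e\mid\pre(\pi_2(e),S_2))>(1+\epsilon)\tau$, examine the previous threshold $\tau^+=(1+\epsilon)\tau$. Throughout that earlier iteration $e\notin S_1\cup S_2$, since $e$ enters $S_1$ only at threshold $\tau$ or later, so $e$ is scanned in the \textbf{foreach} loop there. Letting $\bar S_2$ denote the value of $S_2$ when $e$ is scanned, monotone growth gives $\bar S_2\subseteq\pre(\pi_2(e),S_2)$, so submodularity yields $f(e\mid\bar S_2)\ge f(e\mid\pre(\pi_2(e),S_2))>\tau^+$, while the hereditary property gives $\bar S_2\cup\{e\}\in\I$; hence $e$ would have been accepted at threshold $\tau^+$, a contradiction. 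The boundary case $\tau=\tau_{max}$ has no previous threshold, but there the bound holds directly: $f(e\mid\pre(\pi_2(e),S_2))\le f(e\mid\emptyset)=f(\{e\})-f(\emptyset)\le\tau_{max}\le(1+\epsilon)\tau$, using non-negativity of $f$, $\{e\}\in\I$, and the definition of $\tau_{max}$.

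Finally I would combine $\delta(\pi_2(e))\ge\tau$ with $f(e\mid\pre(\pi_2(e),S_2))\le(1+\epsilon)\tau$ into the per-element bound $f(e\mid\pre(\pi_2(e),S_2))\le(1+\epsilon)\delta(\pi_2(e))$, then use submodularity (since $\pre(\pi_2(e),S_2)\subseteq S_2$) and sum over $e\in O_1^-$ via Eqn.~\eqref{eqn:fundamentaleqn} to conclude. I expect the main obstacle to be the threshold-contradiction step: one must verify that $e$ remains a genuine candidate (not yet in $S_1\cup S_2$, and still insertable into $S_2$ by the hereditary property) throughout the $\tau^+$ iteration, and separately dispose of the $\tau=\tau_{max}$ boundary, since these feasibility facts are exactly what the $(1+\epsilon)$ slack is purchased with.
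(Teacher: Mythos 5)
Your proposal is correct and follows essentially the same route as the paper's proof: establish that $e\in O_1^-$ is not yet in $S_1$ when $\pi_2(e)$ enters $S_2$, note $\delta(\pi_2(e))\geq\tau$, and derive $f(e\mid\pre(\pi_2(e),S_2))\leq(1+\epsilon)\tau$ by showing that otherwise $e$ would have been accepted in the earlier iteration with threshold $(1+\epsilon)\tau$ (using monotone growth of $S_2$, submodularity, and the hereditary property), before summing via submodularity. The only cosmetic difference is the $\tau=\tau_{max}$ boundary case, which you bound directly via $f(e\mid\emptyset)\leq\tau_{max}$ while the paper phrases the same fact as a contradiction.
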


\begin{pfoflma}{\ref{lma:o2s2fast}}
We only prove the first inequality, as the second one can be proved in the same way. For any $e\in O_1^-$, consider the moment that TwinGreedyFast adds $\pi_2(e)$ into $S_2$. Using the same reasoning with that in Lemma~\ref{lma:o2s2twingreedy}, we can prove: (1) $e$ has not been inserted into $S_1$ at the moment that $\pi_2(e)$ is inserted into $S_2$; (2) $\pre(\pi_2(e),S_2)\cup \{e\}\in \I$ (due to Lemma~\ref{lma:definitionofpi}).

Let $\tau$ be the threshold set by the algorithm when $\pi_2(e)$ is inserted into $S_2$. So we must have $\delta(\pi_2(e))\geq \tau$. Moreover, we must have $f(e\mid \pre(\pi_2(e),S_2))\leq {(1+\epsilon)\tau}$. To see this, let us assume $f(e\mid \pre(\pi_2(e),S_2))> {(1+\epsilon)\tau}$ by way of contradiction. If $\tau=\tau_{max}$, then we get $f(e)\geq f(e\mid \pre(\pi_2(e),S_2))> (1+\epsilon)\tau_{max}$, which contradicts $f(e)\leq \tau_{max}$. If $\tau<\tau_{max}$, then consider the moment that $e$ is checked by the TwinGreedyFast algorithm when the threshold is $\tau'=(1+\epsilon)\tau$. Let $S_{2,\tau'}$ be the set of elements in $S_2$ at that moment. Then we have $f(e\mid S_{2,\tau'})\geq \tau'$ due to $S_{2,\tau'}\subseteq \pre(\pi_2(e),S_2)$ and submodularity of $f(\cdot)$. Moreover,
we must have $S_{2,\tau'}\cup \{e\}\in \I$ due to $\pre(\pi_2(e),S_2)\cup \{e\}\in \I$ and the hereditary property of independence systems. Consequently, $e$ should have be added by the algorithm when the threshold is $\tau'$, which contradicts the fact stated above that $e$ has not been added into $S_1$ at the moment that $\pi_2(e)$ is inserted into $S_2$ (under the threshold $\tau$). According to the above reasoning, we get
\begin{eqnarray}
f(O_1^-\mid S_2)\leq \sum_{e\in O_1^-} f(e\mid S_2)\leq \sum_{e\in O_1^-} f(e\mid \pre(\pi_2(e),S_2))\leq (1+\epsilon)\sum_{e\in O_1^-}\delta(\pi_2(e))
\end{eqnarray}
which completes the proof.
\end{pfoflma}

\begin{lemma} \label{lma:o5s1fast}
For the TwinGreedyFast algorithm, we have
\begin{eqnarray}
f(O_3\mid S_1)\leq (1+\epsilon)\sum_{e\in O_3}\delta(\pi_1(e));~~f(O_4\mid S_2)\leq (1+\epsilon)\sum_{e\in O_4}\delta(\pi_2(e))
\end{eqnarray}
\end{lemma}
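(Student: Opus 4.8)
The plan is to prove only the first inequality $f(O_3\mid S_1)\leq (1+\epsilon)\sum_{e\in O_3}\delta(\pi_1(e))$, since the bound on $f(O_4\mid S_2)$ follows from the symmetric argument with the roles of $S_1,S_2$ and $\pi_1,\pi_2$ interchanged. The approach fuses the injection-based comparison used for the TwinGreedy version (Lemma~\ref{lma:o5s1twingreedy}) with the thresholding slack argument already developed for Lemma~\ref{lma:o2s2fast}. The target is a per-element bound: for each $e\in O_3$ I want to show $f(e\mid \pre(\pi_1(e),S_1))\leq (1+\epsilon)\,\delta(\pi_1(e))$, after which summing and applying submodularity will yield the lemma.

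First I would fix $e\in O_3$ and invoke Lemma~\ref{lma:definitionofpi}, which guarantees $\pi_1(e)\in S_1$ with $\pre(\pi_1(e),S_1)\cup\{e\}\in\I$. Let $\tau$ denote the threshold in effect when TwinGreedyFast inserts $\pi_1(e)$ into $S_1$; the insertion rule gives $\delta(\pi_1(e))\geq \tau$. The crux is the claim $f(e\mid \pre(\pi_1(e),S_1))\leq (1+\epsilon)\tau$, which I would establish by contradiction assuming $f(e\mid \pre(\pi_1(e),S_1))> (1+\epsilon)\tau$, splitting into the two cases of Lemma~\ref{lma:o2s2fast}. If $\tau=\tau_{max}$, then since $\pre(\pi_1(e),S_1)\cup\{e\}\in\I$ forces $\{e\}\in\I$ by the hereditary property, I get $f(e)\geq f(e\mid \pre(\pi_1(e),S_1))>(1+\epsilon)\tau_{max}>\tau_{max}$, contradicting $f(e)\leq\tau_{max}$.

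If instead $\tau<\tau_{max}$, I would examine the earlier threshold level $\tau'=(1+\epsilon)\tau$ and the moment $e$ is scanned by the inner loop at that level; note $e$ is indeed scanned because $e\in O_3$ is never added to $S_1\cup S_2$. Writing $S_{1,\tau'}$ for the contents of $S_1$ at that moment, I have $S_{1,\tau'}\subseteq \pre(\pi_1(e),S_1)$ because the level $\tau'$ precedes the level $\tau$ at which $\pi_1(e)$ enters $S_1$. Submodularity then gives $f(e\mid S_{1,\tau'})\geq f(e\mid \pre(\pi_1(e),S_1))>\tau'$, and the hereditary property gives $S_{1,\tau'}\cup\{e\}\in\I$. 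Hence at level $\tau'$ the signal for $e$ satisfies $\Delta_1\geq\tau'$, so $\Delta_i=\max_{j\in\{1,2\}}\Delta_j\geq\tau'$ and the algorithm would have inserted $e$, contradicting $e\in O_3$. This case is the main obstacle, since it requires carefully tracking the dynamic state $S_{1,\tau'}$ of $S_1$ across threshold levels and simultaneously verifying both feasibility and the marginal-gain threshold to force the contradiction.

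Finally, combining the claim with $\delta(\pi_1(e))\geq\tau$ yields $f(e\mid \pre(\pi_1(e),S_1))\leq (1+\epsilon)\delta(\pi_1(e))$ for every $e\in O_3$. Using Eqn.~\eqref{eqn:fundamentaleqn} for the first step and the submodularity inequality $f(e\mid S_1)\leq f(e\mid \pre(\pi_1(e),S_1))$ (valid since $\pre(\pi_1(e),S_1)\subseteq S_1$ and $e\notin S_1$) for the second, I would chain
\begin{eqnarray}
f(O_3\mid S_1)\leq \sum\nolimits_{e\in O_3} f(e\mid S_1)\leq \sum\nolimits_{e\in O_3} f(e\mid \pre(\pi_1(e),S_1))\leq (1+\epsilon)\sum\nolimits_{e\in O_3}\delta(\pi_1(e)), \nonumber
\end{eqnarray}
which completes the proof of the first inequality and hence, by symmetry, of the lemma.
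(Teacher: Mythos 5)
Your proof is correct and follows essentially the same route as the paper's: the paper establishes the per-element bound $f(e\mid \pre(\pi_1(e),S_1))\leq (1+\epsilon)\delta(\pi_1(e))$ by appealing to ``similar reasoning'' as in its proof for $O_1^-$, and you have simply unpacked that reference into the explicit two-case threshold argument ($\tau=\tau_{max}$ versus $\tau<\tau_{max}$ with the earlier level $\tau'=(1+\epsilon)\tau$), which is exactly what the cited reasoning does. No gaps; the chain of inequalities at the end matches the paper's verbatim.
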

\begin{pfoflma}{\ref{lma:o5s1fast}}
We only prove the first inequality, as the second one can be proved in the same way. Consider any $e\in O_3$. According to Lemma~\ref{lma:definitionofpi}, we have $\pre(\pi_1(e),S_1)\cup \{e\}\in \I$, i.e., $e$ can be added into $S_1$ without violating the feasibility of $\I$ when $\pi_1(e)$ is added into $S_1$. By similar reasoning with that in Lemma~\ref{lma:o2s2fast}, we can get $f(e\mid \pre(\pi_1(e),S_1))\leq (1+\epsilon)\delta(\pi_1(e))$, because otherwise $e$ must have been added into $S_1$ in an earlier stage of the TwinGreedyFast algorithm (under a larger threshold) before $\pi_1(e)$ is added into $S_1$, but this contradicts $e\notin S_1\cup S_2$. Therefore, we get
\begin{eqnarray}
f(O_3\mid S_1)\leq \sum_{e\in O_3}f(e\mid S_1)\leq \sum_{e\in O_3}f(e\mid \pre(\pi_1(e),S_1))\leq (1+\epsilon)\sum_{e\in O_3}\delta(\pi_1(e)), \label{eqn:o51}
\end{eqnarray}
which completes the proof.
\end{pfoflma}

\section{Proof of Theorem~\ref{thm:ratiooffast}}

\begin{proofit}
In Theorem~\ref{thm:boundforpsystem} of Appendix B, we will prove the performance bounds of TwinGreedyFast under a $p$-set system constraint. The proof of Theorem~\ref{thm:boundforpsystem} can also be used to prove Theorem~\ref{thm:ratiooffast}, simply by setting $p=1$.
\end{proofit}

\section*{Appendix B: Extensions for a $p$-Set System Constraint}

\setcounter{section}{0}
\renewcommand\thesection{B.\arabic{section}}

When the independence system $(\N,\I)$ input to the TwinGreedyFast algorithm is a $p$-set system, it returns a solution $S^*$ achieving $\frac{1}{2p+2}-\epsilon$ approximation ratio. To prove this, we can define $O_1^+$, $O_1^-$, $O_2^+$, $O_2^-$, $O_3$, $O_4$, $\pre(e,S_i)$ and $\delta(e)$ in exact same way as Definition~\ref{def:keydef}, and then propose Lemma~\ref{lma:definitionofpiforpsystem}, which relaxes Lemma~\ref{lma:definitionofpi} to allow that the preimage by $\pi_1(\cdot)$ or $\pi_2(\cdot)$ of any element in $S_1\cup S_2$ contains at most $p$ elements. The proof of Lemma~\ref{lma:definitionofpiforpsystem} is similar to that of Lemma~\ref{lma:definitionofpi}. For the sake of completeness and clarity, We provide the full proof of Lemma~\ref{lma:definitionofpiforpsystem} in the following:

%
%

\begin{lemma}
There exist a function $\pi_1: O_1^+\cup O_1^-\cup O_2^-\cup O_3 \mapsto S_1$ such that:
\begin{enumerate}
\item For any $e\in O_1^+\cup O_1^-\cup O_2^-\cup O_3$, we have $\pre(\pi_1(e),S_1)\cup \{e\}\in \I$. 
\item For each $e\in O_1^+\cup O_1^-$, we have $\pi_1(e)=e$.
\item Let $\pi_1^{-1}(y)=\{e\in O_1^+\cup O_1^-\cup O_2^-\cup O_3: \pi_1(e)=y\}$ for any $y\in S_1$. Then we have $|\pi_1^{-1}(y)|\leq p$ for any $y\in S_1$.
\end{enumerate}
Similarly, there exists a function $\pi_2: O_1^-\cup O_2^+\cup O_2^-\cup O_4 \mapsto S_2$ such that $\pre(\pi_2(e),S_2)\cup \{e\}\in \I$ for each $e\in O_1^-\cup O_2^+\cup O_2^-\cup O_4$ and $\pi_2(e)=e$ for each $e\in O_2^+\cup O_2^-$ and $|\pi_2^{-1}(y)|\leq p$ for each $y\in S_2$.
\label{lma:definitionofpiforpsystem}
\end{lemma}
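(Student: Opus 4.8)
The plan is to reuse the reverse-order, greedy construction of $\pi_1(\cdot)$ from the proof of Lemma~\ref{lma:definitionofpi}, but to relax it so that each element of $S_1$ may receive up to $p$ preimages, and to replace every appeal to the matroid exchange property by the defining property of a $p$-set system (any two bases of a common ground set differ in size by at most a factor of $p$). As before, I would write $S_1=\{u_1,\dots,u_s\}$ in the order the elements are added, set $L_s=O_1^+\cup O_1^-\cup O_2^-\cup O_3$, and process $j=s,s-1,\dots,1$. At step $j$ I form the candidate set
\[
A_j=\{x\in L_j\setminus\{u_1,\dots,u_{j-1}\}:\{u_1,\dots,u_{j-1}\}\cup\{x\}\in\I\},
\]
which is exactly the set of still-unmapped elements that can legally be sent to $u_j$ (recall $\pre(u_j,S_1)=\{u_1,\dots,u_{j-1}\}$, so $x\in A_j$ guarantees property~1). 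I would then choose $D_j\subseteq A_j$ with $|D_j|=\min\{p,|A_j|\}$, set $\pi_1(x)=u_j$ for every $x\in D_j$, and put $L_{j-1}=L_j\setminus D_j$. Whenever $u_j\in O_1^+\cup O_1^-$ I force $u_j\in D_j$, which is possible because such a $u_j$ lies in $A_j$; this secures property~2 (the self-mapping) and $|D_j|\le p$ secures property~3. The map $\pi_2(\cdot)$ is built symmetrically with $S_2$, $O_1^-\cup O_2^+\cup O_2^-\cup O_4$, and $\pre(\cdot,S_2)$ replacing their counterparts.

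The only remaining claim is that every element is mapped, i.e.\ $L_0=\emptyset$, which I would establish through the invariant $|L_j|\le pj$ for all $0\le j\le s$, proved by downward induction from $j=s$. Two book-keeping facts keep the construction consistent: the only elements of $L_s$ lying inside $S_1$ are those of $O_1^+\cup O_1^-$, and because $A_{j'}$ excludes $\{u_1,\dots,u_{j'-1}\}$ for every $j'>j$, an element $u_j\in O_1^+\cup O_1^-$ is never removed before its own step $j$, so its self-mapping is always available. The base case $|L_s|\le ps$ is itself an instance of the main argument applied to $S_1$: the definitions of $O_2^-$ and $O_3$ force $S_1\cup\{x\}\notin\I$ for every $x\in O_2^-\cup O_3=L_s\setminus S_1$ (for $x\in O_2^-$ this passes from $\pre(x,S_1)\cup\{x\}\notin\I$ via the hereditary property), so $S_1$ is a base of $S_1\cup L_s$; extending the independent set $L_s\subseteq O$ to a base $B$ of $S_1\cup L_s$ and invoking the $p$-set-system inequality gives $|L_s|\le|B|\le p|S_1|=ps$.

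For the inductive step, if $|A_j|\ge p$ then $|D_j|=p$ and $|L_{j-1}|=|L_j|-p\le p(j-1)$ immediately. The delicate case — and the place where the $p$-set-system structure is really needed — is $|A_j|<p$: here $D_j=A_j$, so every surviving $x\in L_{j-1}\setminus\{u_1,\dots,u_{j-1}\}$ satisfies $\{u_1,\dots,u_{j-1}\}\cup\{x\}\notin\I$, which means $\{u_1,\dots,u_{j-1}\}$ is a base of $\{u_1,\dots,u_{j-1}\}\cup L_{j-1}$. Since $L_{j-1}\subseteq O\in\I$ is independent, I extend it to a base $B$ of the same set, and the $p$-set-system property yields $|L_{j-1}|\le|B|\le p\,|\{u_1,\dots,u_{j-1}\}|=p(j-1)$. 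I expect this ``small $A_j$'' case to be the main obstacle, since it is precisely where one must argue that a prefix of $S_1$ is maximal in the relevant ground set and then convert that maximality into a cardinality bound through the $p$-system base-ratio inequality rather than the exact rank equality that a matroid would provide. Once $|L_j|\le pj$ is established for all $j$, taking $j=0$ gives $L_0=\emptyset$, so $\pi_1(\cdot)$ is a total function with the three asserted properties, completing the proof.
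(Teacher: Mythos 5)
Your proposal is correct and follows essentially the same route as the paper's proof: the same reverse-order construction of $D_j\subseteq A_j$ with $|D_j|=\min\{p,|A_j|\}$ forcing $u_j\in D_j$ when $u_j\in O_1^+\cup O_1^-$, and the same downward induction on the invariant $|L_j|\leq pj$, using that a prefix of $S_1$ is a base of the relevant set to invoke the $p$-set-system cardinality bound. Your rendering is in fact slightly more explicit than the paper's at two points (extending $L_s$ and $L_{j-1}$ to bases before applying the base-ratio inequality, and noting that $u_j$ cannot be removed before step $j$), but these are elaborations of the same argument rather than a different approach.
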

\begin{pfoflma}{\ref{lma:definitionofpiforpsystem}}
We only prove the existence of $\pi_1(\cdot)$, as the existence of $\pi_2(\cdot)$ can be proved in the same way. Suppose that the elements in $S_1$ are $\{u_1,\cdots, u_s\}$ (listed according to the order that they are added into $S_1$). We use an argument inspired by \citep{calinescu2011maximizing} to construct $\pi_1(\cdot)$. Let $L_s=O_1^+\cup O_1^-\cup O_2^-\cup O_3$. We execute the following iterations from $j=s$ to $j=0$. At the beginning of the $j$-th iteration, we compute a set $A_j=\{x\in L_j\backslash \{u_1,\cdots,u_{j-1}\}: \{u_1,\cdots,u_{j-1},x\}\in \I\}$. If $|A_j|\leq p$, then we set set $D_j=A_j$; if $|A_j|> p$ and $u_j\in O_1^+\cup O_1^-$ (so $u_j\in A_j$), then we pick a subset $D_j\subseteq A_j$ satisfying $|D_j|=p$ and $u_j\in D_j$; if $|A_j|> p$ and $u_j\notin O_1^+\cup O_1^-$, then we pick a subset $D_j\subseteq A_j$ satisfying $|D_j|=p$. After that, we set $\pi_1(e)=u_j$ for each $e\in D_j$ and set $L_{j-1}=L_j\backslash D_j$, and then enter the $(j-1)$-th iteration.


From the above process, it can be easily seen that Condition 1-3 in the lemma are satisfied. So we only need to prove that each $e\in O_1^+\cup O_1^-\cup O_2^-\cup O_3$ is mapped to an element in $S_1$, which is equivalent to prove $L_0=\emptyset$ as each $e\in L_s\backslash L_0$ is mapped to an element in $S_1$ according to the above process. In the following, we will prove $L_0=\emptyset$ by induction, i.e., proving $|L_j|\leq p j$ for all $0\leq j\leq s$.

When $j=s$, consider the set $M=S_1\cup O_2^-\cup O_3$. Clearly, each element $e\in O_3$ satisfies $S_1\cup \{e\}\notin \I$ according to the definition of $O_3$. Besides, we must have $S_1\cup \{x\}\notin \I$ for each $x\in O_2^-$, because otherwise there exists $e\in O_2^-$ satisfying $S_1\cup \{e\}\in \I$, and hence we get $\pre(e,S_1)\cup \{e\}\in \I$ due to $\pre(e,S_1)\subseteq S_1$ and the hereditary property of independence systems; contradicting $e\in O_2^-$. Therefore, we know that $S_1$ is a base of $M$. As $O_1^+\cup O_1^-\cup O_2^-\cup O_3\in \I$ and $O_1^+\cup O_1^-\cup O_2^-\cup O_3\subseteq M$, we get $|L_s|=|O_1^+\cup O_1^-\cup O_2^-\cup O_3|\leq p|S_1|=ps$ according to the definition of $p$-set system.

Now suppose that $|L_{j}|\leq pj$ for certain $j\leq s$. If $|A_j|> p$, then we have $|D_j|=p$ and hence $|L_{j-1}|=|L_j|-p\leq p(j-1)$. If $|A_j|\leq p$, then we know that there does not exist 
$x\in L_{j-1}\backslash \{u_1,\cdots,u_{j-1}\}$ such that $\{u_1,\cdots,u_{j-1}\}\cup \{x\}\in \I$ due to the above process for constructing $\pi_1(\cdot)$. Now consider the set $M'=\{u_1,\cdots, u_{j-1}\}\cup L_{j-1}$, we know that $\{u_1,\cdots, u_{j-1}\}$ is a base of $M'$ and $L_{j-1}\in \I$, which implies $|L_{j-1}|\leq p(j-1)$ according to the definition of $p$-set system.

The above reasoning proves $|L_j|\leq p j$ for all $0\leq j\leq s$ by induction, so we get $L_0=\emptyset$ and hence the lemma follows.
\end{pfoflma}

With Lemma~\ref{lma:definitionofpiforpsystem}, Lemma~\ref{lma:marginalgainboundfast} still holds under a $p$-set system constraint, as the proof of Lemma~\ref{lma:marginalgainboundfast} only uses the hereditary property of independence systems and does not require that the functions $\pi_1(\cdot)$ and $\pi_2(\cdot)$ are injective.  Therefore, we can still use Lemma~\ref{lma:marginalgainboundfast} to prove the performance bounds of TwinGreedyFast under a $p$-set system constraint, as shown in Theorem~\ref{thm:boundforpsystem}. Note that the proof of Theorem~\ref{thm:boundforpsystem} can also be used to prove Theorem~\ref{thm:ratiooffast}, simply by setting $p=1$.


\begin{theorem}
 When the independence system $(\N,\I)$ input to TwinGreedyFast is a $p$-set system, the TwinGreedyFast algorithm returns a solution $S^*$ with $\frac{1}{2p+2}-\epsilon$ approximation ratio, under time complexity of $\mathcal{O}(\frac{n}{\epsilon}\log \frac{r}{\epsilon})$.
\label{thm:boundforpsystem}
\end{theorem}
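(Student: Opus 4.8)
The plan is to follow the proof of Theorem~\ref{lma:boundoftwingreedy} essentially line by line, substituting the bounded-preimage property of Lemma~\ref{lma:definitionofpiforpsystem} for the injectivity of $\pi_1,\pi_2$, and substituting a thresholded estimate for the exact ``no leftover positive gain'' step. First I would dispose of the degenerate case: if $S_1=\emptyset$ or $S_2=\emptyset$ at termination, the argument of Theorem~\ref{lma:boundoftwingreedy} applies with only notational changes, since in that situation every element placed in the non-empty set was preferred by the selection rule over the empty companion set, which forces $\delta(e)=f(e\mid\emptyset)$ for each such $e$. I therefore assume $S_1\neq\emptyset$ and $S_2\neq\emptyset$, and set $O_5=O\setminus(S_1\cup S_2\cup O_3)$ and $O_6=O\setminus(S_1\cup S_2\cup O_4)$. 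Because $O\setminus S_1=O_2^+\cup O_2^-\cup O_3\cup O_5$ and $O\setminus S_2=O_1^+\cup O_1^-\cup O_4\cup O_6$ are disjoint unions, Eqn.~\eqref{eqn:fundamentaleqn} yields
\begin{align*}
f(O\cup S_1)-f(S_1)&\leq f(O_2^+\mid S_1)+f(O_2^-\mid S_1)+f(O_3\mid S_1)+f(O_5\mid S_1),\\
f(O\cup S_2)-f(S_2)&\leq f(O_1^+\mid S_2)+f(O_1^-\mid S_2)+f(O_4\mid S_2)+f(O_6\mid S_2).
\end{align*}

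The new ingredient relative to the matroid case is the treatment of $O_5$ and $O_6$. For $e\in O_5$ the feasibility $S_1\cup\{e\}\in\I$ holds (as $e\notin O_3$), yet $e$ is never inserted; hence at the smallest threshold used by the algorithm, which is at most $\epsilon\tau_{max}/r$, its marginal gain to $S_1$ has already dropped below that threshold, so $f(e\mid S_1)\leq\epsilon\tau_{max}/r$ by submodularity. Summing over the at most $r$ elements of $O_5\subseteq O$ gives $f(O_5\mid S_1)\leq\epsilon\tau_{max}$, and symmetrically $f(O_6\mid S_2)\leq\epsilon\tau_{max}$. Since each inserted element carries a strictly positive marginal gain, $f(S_1)$ and $f(S_2)$ never decrease during the run, and the first inserted element $w$ (inserted at threshold $\tau_{max}$) satisfies $f(\{w\})=\tau_{max}$; therefore $f(S^*)\geq\tau_{max}$ and $f(O_5\mid S_1)+f(O_6\mid S_2)\leq 2\epsilon f(S^*)$.

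Next I would apply Lemma~\ref{lma:marginalgainboundfast} to the six marginal-gain terms and group them by $\pi_1$ and $\pi_2$. Using $1+\epsilon\geq1$, their sum is at most
\[
(1+\epsilon)\Big(\sum\nolimits_{e\in O_1^+\cup O_2^-\cup O_3}\delta(\pi_1(e))+\sum\nolimits_{e\in O_1^-\cup O_2^+\cup O_4}\delta(\pi_2(e))\Big).
\]
Here the $p$-set structure enters through Property~3 of Lemma~\ref{lma:definitionofpiforpsystem}: each $y\in S_1$ has at most $p$ preimages under $\pi_1$, so by the telescoping identity $\sum_{e}\delta(\pi_1(e))\leq p\sum_{y\in S_1}\delta(y)=p\,(f(S_1)-f(\emptyset))\leq p\,f(S_1)$, and likewise $\sum_{e}\delta(\pi_2(e))\leq p\,f(S_2)$. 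Hence the six terms together are bounded by $(1+\epsilon)\,p\,(f(S_1)+f(S_2))$.

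Finally I would add the two decomposition inequalities, invoke $f(O)\leq f(O\cup S_1)+f(O\cup S_2)$ (valid by non-negativity and $S_1\cap S_2=\emptyset$, exactly as Eqn.~\eqref{eqn:tocombinefo}), and substitute the bounds just obtained to get $f(O)\leq\big(1+(1+\epsilon)p\big)\big(f(S_1)+f(S_2)\big)+2\epsilon f(S^*)$. Using $f(S_1)+f(S_2)\leq2f(S^*)$, the right-hand side is at most $\big(2p+2+2(p+1)\epsilon\big)f(S^*)=2(p+1)(1+\epsilon)f(S^*)$, so $f(S^*)\geq\frac{1}{2(p+1)(1+\epsilon)}f(O)\geq\big(\frac{1}{2p+2}-\epsilon\big)f(O)$ via $\frac{1}{1+\epsilon}\geq1-\epsilon$; setting $p=1$ recovers Theorem~\ref{thm:ratiooffast}. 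For the running time, the outer loop iterates $\mathcal{O}(\frac{1}{\epsilon}\log\frac{r}{\epsilon})$ times, since $\tau$ shrinks from $\tau_{max}$ to $\Theta(\epsilon\tau_{max}/r)$ by factors of $1+\epsilon$, and each pass performs $\mathcal{O}(n)$ marginal evaluations, giving $\mathcal{O}(\frac{n}{\epsilon}\log\frac{r}{\epsilon})$. I expect the main obstacle to be the thresholded bounding of $f(O_5\mid S_1)$ and $f(O_6\mid S_2)$ together with the verification that $f(S^*)\geq\tau_{max}$, since these have no counterpart in the exact TwinGreedy analysis and must be arranged so that all the $(1+\epsilon)$ and $p$ losses collapse into the single clean factor $2(p+1)(1+\epsilon)$.
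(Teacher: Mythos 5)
Your overall architecture is the paper's: handle the degenerate case separately, decompose $f(O\cup S_i)-f(S_i)$ over $O_1^\pm,O_2^\pm,O_3,O_4,O_5,O_6$, apply Lemma~\ref{lma:marginalgainboundfast}, use the bounded-preimage property $|\pi_i^{-1}(y)|\leq p$ from Lemma~\ref{lma:definitionofpiforpsystem} in place of injectivity, and combine via $f(O)\leq f(O\cup S_1)+f(O\cup S_2)$. However, there is one genuine gap: your treatment of $O_5$ and $O_6$ rests on the claim that $f(S^*)\geq\tau_{max}$ because ``the first inserted element $w$ (inserted at threshold $\tau_{max}$) satisfies $f(\{w\})=\tau_{max}$.'' This is false in general. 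The algorithm inserts $e$ only when the \emph{marginal gain} $f(e\mid S_i)$ meets the threshold, while $\tau_{max}=\max\{f(e):\{e\}\in\I\}$ is an absolute function value; when $f(\emptyset)>0$ every marginal gain $f(e\mid\emptyset)=f(e)-f(\emptyset)$ is strictly below $\tau_{max}$, so no element is inserted during the first pass and the first insertion happens at some strictly smaller threshold. Consequently you cannot conclude $f(S^*)\geq\tau_{max}$ as stated (one can only salvage $f(S^*)\geq\tau_{max}/(1+\epsilon)$ with extra work, which perturbs your final constant). The paper avoids this entirely: it bounds $f(O_5\mid S_1)\leq r\cdot\tau_{min}\leq\epsilon\tau_{max}\leq\epsilon f(O)$, using only that the singleton attaining $\tau_{max}$ is feasible and hence $\tau_{max}\leq f(O)$, then moves the resulting $2\epsilon f(O)$ to the left-hand side to get $f(S^*)\geq\frac{1-2\epsilon}{2p+2+2p\epsilon}f(O)\geq(\frac{1}{2p+2}-\epsilon)f(O)$. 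Replacing your $2\epsilon f(S^*)$ term with $2\epsilon f(O)$ repairs the argument with no other changes.

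A second, smaller imprecision: in the degenerate case you assert that the argument of Theorem~\ref{lma:boundoftwingreedy} applies ``with only notational changes.'' For TwinGreedyFast an element of $O\backslash S_1$ that is never inserted only satisfies $f(e\mid\emptyset)<\tau_{min}$ rather than $f(e\mid\emptyset)\leq 0$, so the conclusion is a $(1-\epsilon)$-approximation (via $r\cdot\tau_{min}\leq\epsilon\tau_{max}\leq\epsilon f(O)$) rather than exact optimality; this is how the paper handles it, and it is still amply sufficient for the claimed ratio. The remainder of your proof, including the telescoping bound $\sum_e\delta(\pi_1(e))\leq p\sum_{y\in S_1}\delta(y)\leq p\,f(S_1)$ and the final algebra, matches the paper.
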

\begin{pfofthm}{\ref{thm:boundforpsystem}}
We first consider the special case that $S_1$ or $S_2$ is empty, and show that TwinGreedyFast achieves $1-\epsilon$ approximation ratio under this case. Without loss of generality, we assume $S_2$ is empty. By similar reasoning with the proof of Theorem~\ref{lma:boundoftwingreedy} (Appendix~\ref{sec:proofoftheorem1}), we get $f(S_1\mid \emptyset)\geq f(O\cap S_1\mid \emptyset)$. Besides,  for each $e\in O\backslash S_1$, we must have $f(e\mid \emptyset)< \tau_{min}$ (where $\tau_{min}$ is the smallest threshold tested by the algorithm), because otherwise $e$ should be added into $S_2$ by the TwinGreedyFast algorithm. By the submodularity of $f(\cdot)$, we have
\begin{eqnarray}
f(O)-f(\emptyset) &\leq& f(O\cap S_1\mid \emptyset)+f(O\backslash S_1\mid \emptyset)\leq f(S_1\mid \emptyset) + \sum_{e\in O\backslash S_1}f(e\mid \emptyset) \nonumber\\
&\leq& f(S_1\mid \emptyset) + r\cdot \tau_{min}\leq f(S_1\mid \emptyset) + r\cdot \frac{\epsilon\cdot \tau_{max}}{r}\leq f(S_1\mid \emptyset) + \epsilon f(O), \nonumber 
\end{eqnarray}


which proves that $S_1$ has a $1-\epsilon$ approximation ratio. In the sequel, we consider the case that $S_1\neq \emptyset$ and $S_2\neq \emptyset$. Let $O_5=O\backslash (S_1\cup S_2\cup O_3)$ and $O_6=O\backslash (S_1\cup S_2\cup O_4)$. By submodularity, we have
\begin{eqnarray}
&&f(O\cup S_1)-f(S_1)\leq f(O_2^+ \mid S_1)+f(O_2^- \mid S_1) + f(O_3 \mid S_1) + f(O_5 \mid S_1) \label{eqn:tosumforpsetstart}\\
&&f(O\cup S_2)-f(S_2)\leq f(O_1^+ \mid S_2)+ f(O_1^- \mid S_2)+ f(O_4 \mid S_2) + f(O_6 \mid S_2)
\end{eqnarray}
Using Lemma~\ref{lma:marginalgainboundfast}, we get
\begin{eqnarray}
&&f(O_2^+ \mid S_1)+f(O_2^- \mid S_1) + f(O_3 \mid S_1) +f(O_1^+ \mid S_2)+ f(O_1^- \mid S_2)+ f(O_4 \mid S_2) \nonumber\\
&\leq& (1+\epsilon)\left[\sum_{e\in O_1^+\cup O_2^-\cup O_3}\delta(\pi_1(e))+\sum_{e\in O_1^-\cup O_2^+\cup O_4}\delta(\pi_2(e))\right]\nonumber\\
&\leq& (1+\epsilon)\left[\sum_{e\in S_1}|\pi_1^{-1}(e)|\cdot \delta(e)+\sum_{e\in S_2}|\pi_2^{-1}(e)|\cdot\delta(e)\right]\nonumber\\
&\leq& (1+\epsilon)p\left[\sum_{e\in S_1}\delta(e)+\sum_{e\in S_2}\delta(e)\right]\nonumber\\
&\leq& (1+\epsilon)p\left[f(S_1)+f(S_2)\right],
\end{eqnarray}
where the third inequality is due to Lemma~\ref{lma:definitionofpiforpsystem}. Besides, according to the definition of $O_5$, we must have $f(e\mid S_1)< \tau_{min}$ for each $e\in O_5$, where $\tau_{min}$ is the smallest threshold tested by the algorithm, because otherwise $e$ should be added into $S_1$ as $S_1\cup \{e\}\in \I$. Similarly, we get $f(e\mid S_2)<\tau_{min}$ for each $e\in O_6$. Therefore, we have
\begin{eqnarray}
f(O_5\mid S_1)\leq \sum_{e\in O_5}f(e\mid S_1)\leq r\cdot \tau_{min}\leq r\cdot \frac{\epsilon\cdot \tau_{max}}{r}\leq \epsilon\cdot f(O) \label{eqn:o5211}\\
f(O_6\mid S_2)\leq \sum_{e\in O_6}f(e\mid S_2)\leq r\cdot \tau_{min}\leq r\cdot \frac{\epsilon\cdot \tau_{max}}{r}\leq \epsilon\cdot f(O) \label{eqn:o5212}
\end{eqnarray}
As $f(\cdot)$ is a non-negative submodular function and $S_1\cap S_2=\emptyset$, we have
\begin{eqnarray}
f(O) \leq f(O) +f(O\cup S_1\cup S_2) \leq f(O\cup S_1)+f(O\cup S_2)    \label{eqn:tosumforpsetend}
\end{eqnarray}
By summing up Eqn.~\eqref{eqn:tosumforpsetstart}-\eqref{eqn:tosumforpsetend} and simplifying, we get
\begin{eqnarray}
f(O)&\leq& [1+(1+\epsilon)p][f(S_1)+f(S_2)]+2\epsilon\cdot f(O) \nonumber\\
&\leq& (2p+2+2p\epsilon)f(S^*)+2\epsilon\cdot f(O) \nonumber
\end{eqnarray}
So we have $f(S^*)\geq \frac{1-2\epsilon}{2p+2+2p\epsilon}f(O)\geq (\frac{1}{2p+2}-\epsilon)f(O)$. Note that the TwinGreedyFast algorithm has at most $\mathcal{O}(\log_{1+\epsilon}\frac{r}{\epsilon})$ iterations with $\mathcal{O}(n)$ time complexity in each iteration. Therefore, the total time complexity is $\mathcal{O}(\frac{n}{\epsilon}\log \frac{r}{\epsilon})$, which completes the proof.
\end{pfofthm}

%
%

\section*{Appendix C: Supplementary Materials on Experiments}

\setcounter{section}{0}
\renewcommand\thesection{C.\arabic{section}}

\section{Social Network Monitoring}

It can be easily verified that the social network monitoring problem considered in Section~\ref{sec:exp} is a non-monotone submodular maximization problem subject to a partition matroid constraint. We provide additional experimental results on Barabasi-Albert (BA) random graphs, as shown in Fig.~\ref{fig:BA}. In Fig.~\ref{fig:BA}, we generate a BA graph with 10,000 nodes and $m_0=m=100$, and set $h=5$ for Fig.~\ref{fig:BA}(a)-(b) and set $h=10$ for Fig.~\ref{fig:BA}(c)-(d), respectively. The other settings in Fig.~\ref{fig:BA} are the same with those for ER random graph in Section~\ref{sec:exp}.
It can be seen that the experimental results in Fig.~\ref{fig:BA} are qualitatively similar to those on the ER random graph, {and TwinGreedyFast still runs more than an order of magnitude faster than the other three algorithms. Besides, it is observed from Fig.~\ref{fig:BA} that TwinGreedyFast and TwinGreedy perform closely to Fantom and slightly outperform RRG and SampleGreedy on utility, while it is also possible that TwinGreedyFast/TwinGreedy can outperform Fantom on utility in some cases. }


\begin{figure}[h]
  \centering
  \includegraphics[width=1\textwidth]{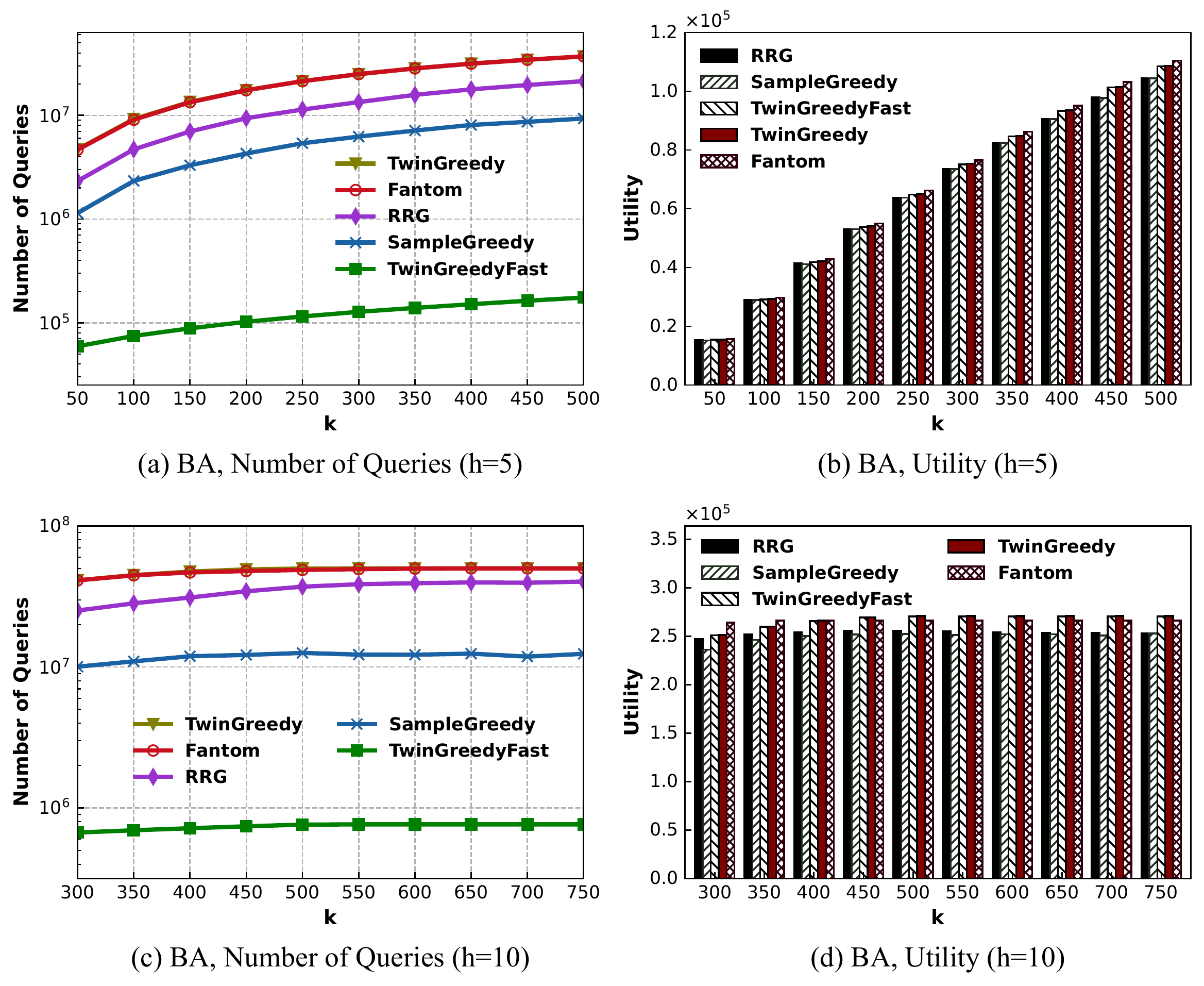}
  \caption{Experimental results for social network monitoring on Barabasi-Albert (BA) random graph}
  \label{fig:BA}
\end{figure}

In Table~\ref{table6}, we study how the utility of TwinGreedyFast can be affected by the parameter $\epsilon$. The experimental results in Table~\ref{table6} reveal that, the utility of TwinGreedyFast slightly increases when $\epsilon$ decreases, and almost does not change when $\epsilon$ is sufficiently small (e.g., $\epsilon\leq 0.02$). Therefore, we would not suffer a great loss on utility by setting $\epsilon$ to a relatively large number in $(0,1)$ for TwinGreedyFast.

\begin{table}
\caption{The utility of TwinGreedyFast ($\times 10^5$) vs. the parameter $\epsilon$ (BA, $h=5$)}
\label{table6}
\centering
\begin{tabular}{lllllllllll}
\toprule 
$\epsilon$ & $k=$50& 100&150&200&250&300&350&400&450&500  \\
\midrule			
0.2&0.145&0.281&0.410&0.529&0.637&0.744&0.836&0.925&1.004&1.078 \\
0.15&0.149&0.289&0.415&0.535&0.643&0.747&0.841&0.929&1.008&1.082 \\
0.1&0.154&0.291&0.418&0.538&0.648&0.751&0.846&0.933&1.013&1.085 \\				
0.05&0.154&0.293&0.421&0.540&0.651&0.753&0.848&0.935&1.015&1.087 \\
0.02&0.155&0.294&0.422&0.541&0.652&0.754&0.849&0.936&1.015&1.087\\
0.01&0.155&0.294&0.422&0.541&0.652&0.754&0.849&0.936&1.015&1.087 \\			
0.005&0.155&0.294&0.422&0.541&0.652&0.754&0.849&0.936&1.015&1.088\\
\bottomrule 
\end{tabular}
\end{table}

\section{Multi-Product Viral Marketing}

We first prove that the multi-product viral marketing application considered in Section~\ref{sec:exp} is an instance of the problem of non-monotone submodular maximization subject to a matroid constraint. Recall that we need to select $k$ seed nodes from a social network $G=(V,E)$ to promote $m$ products, and each node $u\in V$ can be selected as a seed for at most one product. These requirements can be modeled as a matroid constraint, as proved in the following lemma:

\begin{lemma}
Define the ground set $\N=V\times [m]$ and $\I=\{X\subseteq \N :|X|\leq k\wedge \forall u\in V: |X\cap \N_u|\leq 1 \}$, where $\N_u\triangleq\{(u,i): i\in [m]\}$ for any $u\in V$. Then $(\N,\I)$ is a matroid.
\label{lma:provematroid}
\end{lemma}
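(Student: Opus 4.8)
The plan is to verify the three defining conditions of a matroid directly for $(\N,\I)$, since this system is just the rank-$k$ truncation of a partition matroid and only the exchange property requires real work. The nonemptiness and hereditary conditions are immediate: $\emptyset\in\I$ trivially, and if $A\subseteq B\in\I$ then $|A|\leq|B|\leq k$ and $|A\cap\N_u|\leq|B\cap\N_u|\leq 1$ for every $u\in V$, so $A\in\I$. Thus $(\N,\I)$ is an independence system, and it remains to establish the exchange property.

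For the exchange property, I would take arbitrary $A,B\in\I$ with $|A|<|B|$ and look for $x\in B\setminus A$ with $A\cup\{x\}\in\I$. The key observation is that the blocks $\{\N_u\}_{u\in V}$ partition $\N$, and because each block may contribute at most one element to an independent set, the number of blocks met by an independent set equals its cardinality. Writing $U_A=\{u\in V:A\cap\N_u\neq\emptyset\}$ and defining $U_B$ analogously, this gives $|U_A|=|A|<|B|=|U_B|$, so there exists a block $u^*\in U_B\setminus U_A$. I then take $x$ to be the unique element of $B\cap\N_{u^*}$; since $A\cap\N_{u^*}=\emptyset$, this $x$ indeed lies in $B\setminus A$.

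It then remains to confirm $A\cup\{x\}\in\I$, which is the point where both defining constraints must hold simultaneously. The cardinality constraint is automatic: $|A|<|B|\leq k$ forces $|A|\leq k-1$, hence $|A\cup\{x\}|=|A|+1\leq k$. The per-block constraint also holds, because adding $x$ affects only block $u^*$, where $(A\cup\{x\})\cap\N_{u^*}=\{x\}$ has size $1$, while every other block $\N_u$ is unchanged and already satisfied $|A\cap\N_u|\leq 1$. Therefore $A\cup\{x\}\in\I$, completing the exchange property and the proof.

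I expect the only real (and still modest) obstacle to be setting up the counting argument cleanly — specifically, recognizing that the ``at most one element per block'' rule forces $|U_A|=|A|$ exactly, so that the strict cardinality gap $|A|<|B|$ translates into a block that $B$ meets but $A$ does not. Once this correspondence between an independent set and the blocks it touches is in place, both the selection of $x$ and the verification that adding it preserves independence are routine, so I would keep that part brief.
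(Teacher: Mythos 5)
Your proof is correct and follows essentially the same route as the paper's: both arguments reduce the exchange property to a counting/pigeonhole step over the partition blocks $\{\N_u\}_{u\in V}$ (the paper phrases it as a contradiction via $|X|=\sum_{u}|X\cap\N_u|\geq\sum_{u}|Y\cap\N_u|=|Y|$, you phrase it via the identity $|U_A|=|A|$), then observe that $|A|<|B|\leq k$ makes the cardinality constraint automatic. The only difference is that you spell out the hereditary check and the per-block verification that the paper dismisses as evident.
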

\begin{pfoflma}{\ref{lma:provematroid}}
It is evident that $(\N,\I)$ is an independence system. Next, we prove that it satisfies the exchange property. For any $X\in \I$ and $Y\in \I$ satisfying $|X|<|Y|$, there must exist certain $v\in V$ such that $|Y\cap \N_v|>|X\cap \N_v|$ (i.e., $|Y\cap \N_v|=1$ and $|X\cap \N_v|=0$), because otherwise we have $|X|=\sum_{u\in V}|X\cap \N_u|\geq \sum_{u\in V}|Y\cap \N_u|=|Y|$; contradicting $|X|<|Y|$. As $|X|<|Y|\leq k$, we can add the element in $Y\cap \N_v$ into $X$ without violating the feasibility of $\I$, which proves that $(\N,\I)$ satisfies the exchange property of matroids.
\end{pfoflma}

Next, we prove that the objective function in multi-product viral marketing is a submodular function defined on $2^{\N}$:

\begin{lemma}
For any $S\subseteq \N$ and $S\neq \emptyset$, define
\begin{eqnarray}
f(S)=\sum_{i\in [m]}f_i(S_i)+\left(B-\sum_{i\in [m]}\sum_{v\in S_i}c(v)\right) \label{eqn:targetfuntion}
\end{eqnarray}
where $S_i\triangleq\{u\mid (u,i)\in S\}$ and $f_i(\cdot)$ is a non-negative submodular function defined on $2^V$ (i.e., an influence spread function). We also define $f(\emptyset)=0$. Then $f(\cdot)$ is a submodular function defined on $2^{\N}$.
\label{lma:proofsubmodularf}
\end{lemma}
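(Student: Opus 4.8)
The plan is to verify submodularity directly from the definition by showing that $f(X) + f(Y) \geq f(X \cup Y) + f(X \cap Y)$ for all $X, Y \subseteq \N$, reducing everything to the submodularity of the component functions $f_i(\cdot)$ and the modularity of the cost term. First I would handle the nontrivial case $X, Y \neq \emptyset$ (the cases involving $\emptyset$ being checkable separately since $f(\emptyset) = 0$), and decompose $f(\cdot)$ into two parts: the influence part $g(S) = \sum_{i \in [m]} f_i(S_i)$ and the budget part $h(S) = B - \sum_{i \in [m]} \sum_{v \in S_i} c(v)$. Since a sum of submodular functions is submodular and a modular (additive) function is trivially submodular, it suffices to show each part is submodular on $2^{\N}$.

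For the influence part, the key observation is that the projection $S \mapsto S_i = \{u : (u,i) \in S\}$ interacts well with set operations: for any $X, Y \subseteq \N$ we have $(X \cup Y)_i = X_i \cup Y_i$ and $(X \cap Y)_i = X_i \cap Y_i$, because the elements of $\N = V \times [m]$ carry their product index, so taking unions and intersections commutes with projecting onto the $i$-th coordinate slice. Given this, for each fixed $i$ the submodularity of $f_i(\cdot)$ on $2^V$ yields
\begin{eqnarray}
f_i(X_i) + f_i(Y_i) \geq f_i(X_i \cup Y_i) + f_i(X_i \cap Y_i) = f_i((X\cup Y)_i) + f_i((X\cap Y)_i), \nonumber
\end{eqnarray}
and summing over $i \in [m]$ shows that $g(\cdot)$ is submodular.

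For the budget part, I would note that $\sum_{i \in [m]} \sum_{v \in S_i} c(v) = \sum_{(u,i) \in S} c(u)$ is a modular set function on $2^{\N}$ (a sum of per-element weights $c(u)$ over the elements of $S$), so $h(\cdot)$ is modular and hence satisfies the submodularity inequality with equality. Adding the two inequalities and noting that $f = g + h$ on nonempty sets completes the argument. The main obstacle, and the only step requiring genuine care, is the projection-commutes-with-set-operations identity together with correctly handling the $\emptyset$ boundary case where $f$ is defined separately; I would confirm that the inequality is not violated when one of $X, Y$ is empty by checking directly that $f(S) \geq 0$ is consistent with the assumption that $B$ is large enough to keep revenue non-negative, so no degenerate sign issues arise.
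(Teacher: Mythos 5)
Your proof is correct, but it takes a genuinely different route from the paper's. The paper verifies the diminishing-returns characterization pointwise: for $S\subsetneq T\subseteq\N$ and $x=(u,i)\notin T$ it computes $f(x\mid T)=f_i(u\mid T_i)-c(u)$ and compares with $f(x\mid S)$, treating $S=\emptyset$ as a single explicit special case via $f(x\mid\emptyset)=f_i(u)+B-c(u)\geq f_i(u\mid T_i)-c(u)$. You instead use the lattice definition $f(X)+f(Y)\geq f(X\cup Y)+f(X\cap Y)$, decompose $f$ into a sum of lifted submodular functions $S\mapsto f_i(S_i)$ plus a modular budget term, and exploit that the projections commute with unions and intersections; this is more structural and makes the ``sum of submodular plus modular'' skeleton visible, at the cost of having to patch the decomposition at $\emptyset$ where $f$ is redefined. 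One point to sharpen: the boundary configuration that actually needs an argument is not ``one of $X,Y$ is empty'' (that case is a trivial equality $f(\emptyset)+f(Y)\geq f(Y)+f(\emptyset)$), but rather $X\cap Y=\emptyset$ with both $X,Y$ nonempty, where $f(\emptyset)=0$ appears on the right-hand side in place of the formula value $g(\emptyset)+h(\emptyset)=\sum_{i\in[m]}f_i(\emptyset)+B$. Since $f_i(\cdot)\geq 0$ and $B\geq 0$, that formula value is at least $0=f(\emptyset)$, so lowering the value at the bottom of the lattice only weakens the right-hand side and submodularity is preserved; your remark about ``no degenerate sign issues'' gestures at exactly this, but you should state it for that configuration rather than for the trivial one.
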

\begin{pfoflma}{\ref{lma:proofsubmodularf}}  For any $S\subsetneqq T\subseteq \N$ and any $x=(u,i)\in \N\backslash T$, we must have $u\notin S_i$ and $u\notin T_i$. So we get $f(x\mid T)=f_i(u\mid T_i)-c(u)$. If $S\neq \emptyset$, then we have $f(x\mid S)=f_i(u\mid S_i)-c(u)$ and hence $f(x\mid T)\leq f(x\mid S)$ due to $S_i\subseteq T_i$ and the submodularity of $f_i(\cdot)$. If $S=\emptyset$, then we also have $f(x\mid S)=f_i(u)+B-c(u)\geq f_i(u\mid T_i)-c(u)=f(x\mid T)$, which completes the proof.
\end{pfoflma}


As we set $B=m\sum_{u\in V}c(u)$, the objective function $f(\cdot)$ is also non-negative. Note that $f_i(A)$ denotes the total expected number of nodes in $V$ that can be activated by $A~(\forall A\subseteq V)$ under the celebrated Independent Cascade (IC) Model~\citep{kempe2003maximizing}. As evaluating $f_i(A)$ for any given $A\subseteq V$ under the IC model is an NP-hard problem, we use the estimation method proposed in~\citep{borgs2014maximizing} to estimate $f_i(A)$, based on the concept of ``Reverse Reachable Set'' (RR-set). For completeness, we introduce this estimation method in the following:

Given a directed social network $G=(V,E)$ with each edge $(u,v)$ associated with a probability $p_{u,v}$, a random RR-set $R$ under the IC model is generated by: (1) remove each edge $(u,v)\in E$ independently with probability $1-p_{u,v}$ and reverse $(u,v)$'s direction if it is not removed; (2) sample $v\in V $ uniformly at random and set $R$ as the set of nodes reachable from $v$ in the graph generated by the first step. Given a set $Z$ of random RR-sets, any $i\in [m]$ and any $A\subseteq V$, we define
\begin{eqnarray}
\hat{f}_i(A)= \sum\nolimits_{R\in Z} |V|\cdot\min\{1, |A\cap R|\}/|Z|
\end{eqnarray}
According to \citep{borgs2014maximizing}, $\hat{f}_i(A)$ is an unbiased estimation of $f_i(A)$, and $\hat{f}_i(\cdot)$ is also a non-negative monotone submodular function defined on $2^{V}$. Therefore, in our experiments, we generate a set $Z$ of one million random RR-sets and use $\hat{f}_i()$ to replace ${f}_i()$ in the objective function shown in Eqn.~\eqref{eqn:targetfuntion}, which keeps $f(\cdot)$ as a non-negative submodular function defined on $2^{\N}$.



\clearpage

\end{document}